\documentclass[11pt,a4paper]{article}

\usepackage[utf8]{inputenc}
\usepackage[T1]{fontenc}
\usepackage{amsmath,amssymb,amsthm}
\usepackage{mathtools}
\usepackage{graphicx}
\usepackage[margin=1in]{geometry}
\usepackage{hyperref}
\usepackage{cleveref}
\usepackage[dvipsnames]{xcolor}
\usepackage{enumitem}
\usepackage{booktabs}
\usepackage{algorithm}
\usepackage{algpseudocode}
\usepackage{tikz}
\usetikzlibrary{braids,knots,decorations.markings,automata,positioning,arrows.meta}

\newtheorem{theorem}{Theorem}[section]
\newtheorem{lemma}[theorem]{Lemma}
\newtheorem{proposition}[theorem]{Proposition}
\newtheorem{corollary}[theorem]{Corollary}

\theoremstyle{definition}
\newtheorem{definition}[theorem]{Definition}

\theoremstyle{remark}

\newcommand{\Bn}{B_n}

\newcommand{\NHI}{\mathrm{NHI}}
\newcommand{\DWS}{\operatorname{DWS}}
\newcommand{\LE}{\operatorname{LE}}
\newcommand{\SR}{\operatorname{SR}}

\title{Detecting Privilege Escalation with Temporal Braid Groups}
\author{Christophe Parisel\thanks{Email: ch.parisel@gmail.com}}
\date{\today}

\begin{document}
\maketitle

\begin{abstract}
Within the Strongly Connected Components (SCCs) formed during the temporal
evolution of a Cloud permission graph,
we use the Burau Lyapunov exponent $\LE$ as an algebraic
probe to locate the boundary between two risk regimes and to calibrate cheaper detection
instruments.  We prove that no Abelian statistic (edge counts, net
privilege flow, gate-firing rates) can determine $\LE$; empirically,
over 49,972 deployments drawn from a 1000-SCC corpus, temporal
braid $\LE$ retains partial correlation $r = 0.175$ with deployment
structure after controlling for firing rate ($p < 10^{-3}$), while the
Abelian counting proxy collapses to $r = 0.001$ ($p = 0.98$): zero
residual signal. 

The non-commutation advantage is small, but actionable:
we show how to leverage it to discriminate the two outstanding risk regimes, that we call dispersed
and focused, for automating classification and governing remediation of risky Cloud permission flows.

\end{abstract}

%% ============================================================
\section{Introduction}
\label{sec:intro}

The security posture of a cloud identity is not determined by a
single permission snapshot but by its full temporal trajectory.
Two non-human identities (NHIs) may hold identical permissions
today yet diverge tomorrow: one may ratchet irreversibly toward
broader access while the other oscillates within a bounded envelope.

In a companion paper~\cite{primorial2026}, we introduced the primorial
invariant, which encodes the cycle structure of a Strongly Connected
Component (SCC)~\cite{tarjan1972}  as a vector of rational numbers and provides a
first-pass topological screen: it discriminates \emph{oscillators}
(SCCs whose directed edges cannot sustain monotone privilege escalation)
from \emph{ratchets} (SCCs with at least one directed WAR path capable
of irreversible upward drift).  Oscillators are topologically safe and
require no further dynamic analysis; ratchets carry escalation potential
and must be assessed dynamically.

Within ratchets, a second distinction governs remediation.

A \textbf{focused} ratchet concentrates escalation through few paths;
non-commutative cancellations moderate spectral growth and WAR
reassignment is curative.  A \textbf{dispersed} ratchet has a hub-rich
topology with multiple independent escalation paths; topology surgery
(adding or reversing edges) is required.

The central result is that the Burau Lyapunov exponent $\LE$, while small 
when taken over the whole domain, is the only available separator of these 
two regimes at their shared boundary: 
it is provably beyond
the reach of any Abelian statistic (\Cref{prop:abelian-blindness}).
Burau LE serves as both \emph{boundary separator} 
and \emph{calibration oracle} (validating cheaper
instruments).  The Abelian gate-firing rate provides a fast first-pass.

\newpage
\paragraph{Contributions.}
\begin{enumerate}
  \item \textbf{Impossibility result.}  No Abelian statistic (edge
        counts, net privilege flow, or gate-firing rate) can determine
        $\LE$ (\Cref{prop:abelian-blindness}).  A sharp structural
        instance: any directed cycle in a ratchet SCC contributes
        exactly zero to $\DWS$ for \emph{every} WAR assignment, by a
        telescoping-sum identity (\Cref{cor:telescoping}), so
        directed-cycle topologies have a $\DWS$-invisible escalation
        channel that the braid walk detects unconditionally.  Temporal
        braid $\LE$ retains partial correlation $r = 0.175$ with
        deployment structure after controlling for firing rate
        ($p < 10^{-3}$); the Abelian proxy collapses to $r = 0.001$
        ($p = 0.98$).
  \item \textbf{Two-regime classification with Burau as calibration oracle.}
        $\LE$ defines a focused/dispersed boundary that determines
        which intervention (WAR reassignment vs.\ topology surgery) is
        required.  Across 49,972 (SCC, WAR) pairs from a 1000-SCC corpus,
        5.7\% of pairs disagree between Burau and the Abelian rate,
        with topology-dependent FD/DF structure that the Abelian rate
        cannot predict.
  \item \textbf{Gate condition and braid construction.}  NHI walkers
        co-occurring on ascending directed edges inject braid generator
        $\sigma_i^2\sigma_{i+1}^{-1}$, coupling graph topology to
        privilege-flow alignment (\Cref{sec:construction}).
  \item \textbf{Firing-rate lower bound for hub SCCs.}  For $k$-hub SCCs,
        the firing rate exceeds a topology-derived lower bound
        (\Cref{thm:firing-rate}); under a spoke-coverage condition the
        bound places dense hubs unconditionally in the dispersed regime.
\end{enumerate}

%% ============================================================
\section{Background}
\label{sec:background}

\subsection{Strongly connected components}

A \emph{directed graph} $G = (V,E)$ consists of vertices and
directed edges. A \emph{strongly connected component} (SCC) is a
maximal subgraph in which every vertex is reachable from every
other vertex via directed paths. SCCs are the ``inescapable
neighbourhoods'' of a directed graph: once inside, every vertex can
reach every other.

In graph theory, determining whether a directed graph is strongly
connected is a classical problem (Tarjan's algorithm runs in linear
time). Every directed graph decomposes uniquely into SCCs connected
by a directed acyclic graph (DAG) of inter-component edges.

A \emph{cycle} in a directed graph is a closed walk
$v_0 \to v_1 \to \cdots \to v_k = v_0$. Every SCC with more than
one vertex contains at least one cycle; the cycle structure encodes
recurrence patterns. When vertices carry scalar labels (in our
setting, privilege weights) the label changes along a cycle
determine whether the cycle amplifies, dampens, or preserves the
scalar quantity.

\subsection{Random walks on graphs and Markov chains}
\label{sec:markov}

A \emph{random walk} on a directed graph $G = (V,E)$ is a discrete
Markov chain: from the current vertex $v$, the walker follows one
outgoing edge chosen uniformly at random, reaching neighbour $u$
with probability $P_{vu} = 1/\deg^+(v)$ if $(v,u)\in E$, and
$P_{vu}=0$ otherwise.  Here $\deg^+(v)$ is the out-degree of $v$
and $P \in [0,1]^{|V|\times|V|}$ is the \emph{transition matrix}.

A Markov chain is \emph{irreducible} if every state is reachable
from every other state in a finite number of steps.  An SCC is
precisely the maximal subgraph on which the restricted walk is
irreducible: inside an SCC every vertex can reach every other
(by strong connectivity), while no edge leaves the component.
The random walk confined to an SCC is therefore irreducible by
construction.

By the Perron--Frobenius theorem~\cite{perron}, an irreducible aperiodic Markov
chain has a unique \emph{stationary distribution} $\pi$ satisfying
$\pi P = \pi$ with $\sum_{v}\pi(v) = 1$.  The stationary mass
$\pi(v)$ is the long-run fraction of time the walker spends at $v$.
For a uniform random walk on a strongly connected graph, $\pi(v)$
is proportional to the in-degree of $v$: high-fanin vertices
accumulate the most visit mass.

In our framework, the NHI walkers of \Cref{sec:construction} perform independent random
walks on a fixed ratchet SCC.  Irreducibility ensures that the
long-run walk statistics are well-defined and independent of
starting position.  In particular, the Lyapunov exponent
(\Cref{def:le}) can be consistently estimated from a single
trajectory of sufficient length rather than requiring ensemble
averages.  Stationary mass determines which vertices (hence
which edges) are visited most frequently, shaping how often
the gate condition fires and which braid generators enter the
accumulated product.

\subsection{Braid groups}

The \emph{braid group}~\cite{artin1947}  $B_n$ on $n$ strands is the group generated
by $\sigma_1, \ldots, \sigma_{n-1}$ subject to:
\begin{align}
  \sigma_i \sigma_j &= \sigma_j \sigma_i
    &&\text{if } |i - j| \geq 2,
    \label{eq:braid-commute} \\
  \sigma_i \sigma_{i+1} \sigma_i
    &= \sigma_{i+1} \sigma_i \sigma_{i+1}
    &&\text{(Yang--Baxter relation).}
    \label{eq:yang-baxter}
\end{align}
Geometrically, $\sigma_i$ swaps strands $i$ and $i+1$ with strand
$i$ passing over strand $i+1$, and $\sigma_i^{-1}$ is the
reverse crossing. A \emph{braid word} is a finite product of
generators and their inverses.

\subsection{Non-Commutation}

A group is \emph{Abelian} (commutative) if $ab = ba$ for all
elements. The braid group $B_2$ is generated by a single element
$\sigma_1$ and is therefore Abelian. For $n \geq 3$,
$B_n$ is \emph{non-Abelian}: $\sigma_1 \sigma_2 \neq
\sigma_2 \sigma_1$. This non-commutativity is essential because
it allows products of braid generators to exhibit
\emph{exponential growth} in matrix representations, the key
mechanism underlying our risk metric.

In an Abelian group, matrix representations have eigenvalues on
the unit circle and spectral radius 1: no matter how many
generators you multiply, the product cannot grow. Non-Abelian
words can break this barrier, producing spectral radii strictly
greater than 1 and hence exponential growth under iteration.

\subsection{The Burau representation}

The \emph{Burau representation} is a homomorphism
$\rho_t : B_n \to \mathrm{GL}_n(\mathbb{Z}[t, t^{-1}])$
that sends each braid generator $\sigma_i$ to an $n \times n$
matrix with entries depending on a formal parameter $t$. At the
specialisation $t = -1$, all entries become integers, enabling
exact computation. The spectral radius of the resulting integer
matrix product measures how fast the braid ``stretches'' under
iteration. This is the signal we use to quantify risk.
\Cref{sec:burau} gives the full definition.

In this framework, the full Burau computation ($N \times N$ matrices,
multi-NHI walks) serves as \emph{validation infrastructure}: the
non-Abelian ground truth that proves temporal braid LE captures
genuine non-Abelian structure beyond the Abelian rate.

\newpage

%% ============================================================
\section{Construction: Permission Braids with Signal Injection}
\label{sec:construction}
\subsection{Permission transition graphs and the WAR permission scalar}

Let $G = (V, E)$ be a directed graph where each vertex $v \in V$
represents a permission state characterised by a scalar, the 
\emph{WAR norm}\cite{parisel2025scoring} $W(v) \in \mathbb{Z}_{\geq 0}$,
and each edge $(u, v) \in E$ represents an observed permission
transition. Edges are either \emph{directed} ($u \to v$, irreversible:
$v \to u \notin E$) or \emph{bidirectional} ($u \leftrightarrow v$,
reversible: both $u \to v$ and $v \to u \in E$).

\begin{definition}[Directed WAR flow]
\label{def:war-flow}
For an edge $(u \to v) \in E$, the \emph{WAR flow} is
$\delta(u \to v) = W(v) - W(u)$. A directed edge is
\emph{ascending} if $\delta > 0$, \emph{flat} if $\delta = 0$,
and \emph{descending} if $\delta < 0$.
\end{definition}

\subsection{SCC classification}

We classify SCCs along two axes:
\begin{itemize}
  \item \textbf{Topology axis}: does the SCC contain at least one
        directed edge?
  \item \textbf{Privilege axis}: does the current WAR assignment align
        with the directed edges (ascending flow)?
\end{itemize}

SCCs with no directed edges are \emph{periodic}: WAR trajectories
are mean-reverting, and the braid word produced by walkers on such
SCCs is Abelian (all generators commute). These are structurally 
benign and are fully characterised
by the primorial period-2 oscillator class~\cite{primorial2026}.

SCCs with at least one directed edge are \emph{ratchets}. The present
paper focuses exclusively on ratchets, since these are the SCCs for
which the deployed WAR assignment determines potential severity.

\subsection{NHI walkers and strand ordering}

Fix a ratchet SCC $C$ with $n$ walkers $\NHI_1, \ldots, \NHI_n$
performing simultaneous random walks over $V(C)$.
The start positions of each NHI is random, since SCCs do not have a 
defined entry point (unlike DAGs).
At each epoch $t$, the walkers are ordered by their current WAR
values: strand $i$ is assigned to the walker with the $i$-th
smallest WAR value (ties broken by a fixed NHI identifier order).

The simultaneous walker construction is a mathematical device used to
generate braid words from a static SCC topology under a given WAR
assignment. It does not model real-world concurrency of NHIs, nor does
it assume that two identities escalate privileges at the same time in
production.

The walkers serve as a sampling mechanism: their concurrent motion
induces non-commuting generators capturing
structural interactions between directed edges under WAR ordering.

\subsection{Gate condition and braid word injection}
\label{sec:gate}

When two adjacent walkers both step along directed edges in the same epoch, with both edges
ascending in privilege, the ratchet has fired: two independent ascending directed edges coexist locally in the WAR-ordered topology. 
The injected braid word encodes this structural configuration algebraically.

\begin{definition}[Gate condition]
\label{def:gate}
At epoch transition $t \to t+1$, the \emph{gate fires} for walkers
at ranks $i$ and $i+1$ if and only if:
\begin{enumerate}
  \item Both walkers traverse \emph{directed} edges
        (neither edge is bidirectional), and
  \item Both edges have positive net WAR flow:
        $\delta(\text{edge}_i) > 0$ and
        $\delta(\text{edge}_{i+1}) > 0$.
\end{enumerate}
\end{definition}

The gate fires only when both axes are satisfied at once: the edges
must be truly irreversible (directed, not bidirectional), and the
privilege flow must run uphill. Either condition alone is
insufficient: a directed edge with descending WAR represents a
privilege drop, not an escalation.

The gate condition should be understood as an algebraic trigger
that enriches the braid word when two ascending directed edges coexist
locally in the topology. It is not a claim about synchronized privilege
escalation events in operational systems.

\begin{definition}[Permission braid with word injection]
\label{def:perm-braid}
The \emph{permission braid} $\beta_C \in \Bn$ is an algebraic probe of SCC 
potential severity.  It is constructed as
follows. At each epoch $t \to t+1$, for each adjacent pair of walkers
at ranks $i$ and $i+1$:
\begin{enumerate}
  \item If the gate fires (\Cref{def:gate}): append
        $\sigma_i^2 \sigma_{i+1}^{-1}$ to the braid word.
  \item Otherwise: append nothing (the pair contributes no generators
        at this step).
\end{enumerate}
Rank swaps that do not satisfy the gate condition are deliberately
excluded. A lone generator $\sigma_i^{\pm 1}$ has Burau eigenvalues
on the unit circle ($\mathrm{SR} = 1$ at $t = -1$), so it contributes
no spectral growth. Including such generators would increase the
matrix product length without adding signal.
\end{definition}

\subsection{The injection word}
\label{sec:word-choice}

The injection word must have spectral radius $> 1$ under the Burau
representation at $t = -1$, so that repeated gate firings drive
exponential growth of the accumulated matrix product.

Any word with SR $> 1$ must involve at least two \emph{adjacent}
generator indices: single-index words stay Abelian (SR $= 1$), and
non-adjacent generators commute (SR $= 1$). The shortest candidate
is $\sigma_i \sigma_{i+1}^{-1}$ (length~2,
SR $= \phi^2 \approx 2.618$), but many non-Abelian words of
length~3 also qualify. Among them, we select
$\sigma_i^2 \sigma_{i+1}^{-1}$ for its higher spectral radius:
SR $= 2 + \sqrt{3} \approx 3.732$, a $1.43\times$ lift over the
length-2 word (\Cref{prop:sr-injection}).

\paragraph{Adjacent-position cancellation.}
All words of the form $\sigma_i^a \sigma_{i+1}^b$ with opposite-sign 
exponents share a structural limitation: when two such words fire at adjacent
positions $i$ and $i+1$ within the same epoch, the inner generators
cancel algebraically at least partially, reducing the spectral radius. For our
word:
$\sigma_i^2 \sigma_{i+1}^{-1} \cdot \sigma_{i+1}^2
 \sigma_{i+2}^{-1} = \sigma_i^2 \sigma_{i+1} \sigma_{i+2}^{-1}$,
with SR reduced to~$1$ (the Burau matrix of this word has finite order~$6$
in $\mathrm{GL}(3,\mathbb{Z})$, so all its eigenvalues are roots of unity).
The same holds for the length-2 alternative:
$\sigma_i \sigma_{i+1}^{-1} \cdot \sigma_{i+1} \sigma_{i+2}^{-1}
 = \sigma_i \sigma_{i+2}^{-1}$, also yielding SR $= 1$ (non-adjacent generators commute and are individually
unipotent at $t=-1$, so their product is unipotent).

\medskip

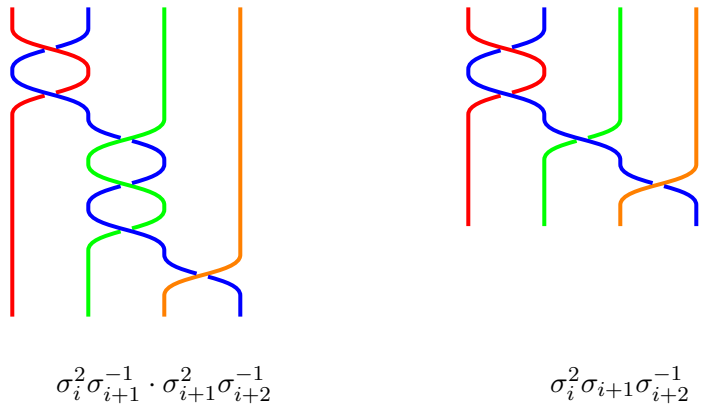
\begin{figure}[h]
\centering
\begin{tikzpicture}[scale=1]
% --- Left braid word ---
\pic[
    line width=1.5pt,
    braid/.cd,
    width=1cm,
    crossing height=.6cm,
    strand 1/.style={red},
    strand 2/.style={blue},
    strand 3/.style={green},
    strand 4/.style={orange},
] at (0,0) {braid={s_1 s_1 s_2^{-1} s_2 s_2 s_3^{-1}}};

\node at (2,-5) {$\sigma_i^2 \sigma_{i+1}^{-1} \cdot \sigma_{i+1}^2 \sigma_{i+2}^{-1}$};

% --- Right braid word ---
\pic[
    line width=1.5pt,
    braid/.cd,
    width=1cm,
    crossing height=.6cm,
    strand 1/.style={red},
    strand 2/.style={blue},
    strand 3/.style={green},
    strand 4/.style={orange},
] at (6,0) {braid={s_1 s_1 s_2 s_3^{-1}}};

\node at (8,-5) {$\sigma_i^2 \sigma_{i+1} \sigma_{i+2}^{-1}$};

\end{tikzpicture}
\caption{Adjacent-position cancellation:  
$\sigma_i^2 \sigma_{i+1}^{-1} \cdot \sigma_{i+1}^2 \sigma_{i+2}^{-1}$  
collapses to a Burau-flat braid with spectral radius $1$.}  
\end{figure}

The adjacent-pair guard (\Cref{sec:gate}) prevents this within-epoch
cancellation by skipping position $i+1$ whenever position $i$ fires.
Cross-epoch interactions at adjacent positions are not cancelled
algebraically: they compose non-trivially, producing the spectral
cancellation that the shuffling experiment detects
(\Cref{sec:shuffling}).

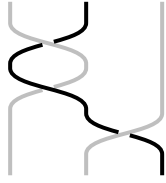
\begin{figure}[h]
\centering
\begin{tikzpicture}[scale=1]
\pic[
    line width=1.5pt,
    braid/.cd,
    width=1cm,
    crossing height=.6cm,
    strand 1/.style={lightgray},
    strand 2/.style={black},
    strand 3/.style={lightgray},
] at (6,0) {braid={s_1 s_1 s_2^{-1}}};

\end{tikzpicture}
\caption{The injection word $\sigma_1^2\sigma_2^{-1}$
  on three strands. Two positive crossings of strands 1--2 followed
  by one negative crossing of strands 2--3 produce a genuinely
  non-Abelian braid with SR $= 2+\sqrt{3}$.}
\end{figure}

Each time the gate fires,
the injection word's Burau matrix is multiplied into the running
product. The dominant eigenvalue governs the per-injection growth
factor of that product: if it exceeds 1, the spectral radius of
the accumulated matrix grows exponentially in the number of
gate firings, producing a positive Lyapunov exponent. A dominant eigenvalue of exactly 1 would mean
no growth at all (the braid is ``flat''). The further above 1, the
faster the braid stretches under iteration, and the stronger the
risk signal.

\begin{proposition}[Spectral radius of the injection word]
\label{prop:sr-injection}
The Burau matrix at $t = -1$ of $\sigma_i^2 \sigma_{i+1}^{-1}$ has
characteristic polynomial with roots $\{1, 2+\sqrt{3}, 2-\sqrt{3}\}$.
The dominant eigenvalue is $2 + \sqrt{3} \approx 3.732$.
\end{proposition}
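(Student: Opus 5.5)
The plan is a direct matrix computation. By the commutation relations \eqref{eq:braid-commute}, the word $\sigma_i^2\sigma_{i+1}^{-1}$ only involves strands $i,i+1,i+2$. Its Burau image on $n$ strands is therefore block diagonal: an identity block of size $i-1$, a $3\times 3$ block, and an identity block of size $n-i-2$. The identity blocks contribute only eigenvalue $1$. So it suffices to treat $n=3$, $i=1$, and to show that the $3\times 3$ block has eigenvalues $\{1,2\pm\sqrt3\}$.

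I will use the unreduced Burau convention, in which $\sigma_i$ acts on strands $i,i+1$ by $\begin{pmatrix}1-t & t\\ 1 & 0\end{pmatrix}$. At $t=-1$ this block is $S=\begin{pmatrix}2&-1\\1&0\end{pmatrix}$, and its inverse is $S^{-1}=\begin{pmatrix}0&1\\-1&2\end{pmatrix}$. The computation then runs in three steps.
\begin{enumerate}
  \item Form $A=\rho_{-1}(\sigma_1)=S\oplus 1$ and square it: $A^2=\begin{pmatrix}3&-2&0\\2&-1&0\\0&0&1\end{pmatrix}$.
  \item Form $B=\rho_{-1}(\sigma_2^{-1})=1\oplus S^{-1}$ and multiply:
  \[
  M=A^2B=\begin{pmatrix}3&0&-2\\2&0&-1\\0&-1&2\end{pmatrix}.
  \]
  \item Read off the invariants of $M$. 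Its trace is $5$. Its determinant is $\det(S)^2\det(S)^{-1}=1$, since $\det\rho_t(\sigma_i)=-t$.
\end{enumerate}

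To pin down the third eigenvalue, I will use the standard fact that the unreduced Burau representation fixes the vector $(1,\dots,1)^{\top}$, because every row sum of each generator's matrix equals $1$. This can also be checked directly: each row of $M$ sums to $1$. Hence $1$ is an eigenvalue of $M$. Call the other two eigenvalues $\lambda$ and $\mu$. Then $\lambda\mu=\det M=1$ and $1+\lambda+\mu=\operatorname{tr}M=5$, so $\lambda+\lambda^{-1}=4$. It follows that
\[
\chi_M(x)=(x-1)(x^2-4x+1)=x^3-5x^2+5x-1,
\]
with roots $1$ and $2\pm\sqrt3$. The dominant eigenvalue is $2+\sqrt3$, because $2-\sqrt3<1<2+\sqrt3$.

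The main obstacle is not the arithmetic but the choice of convention. The paper does not fix a Burau convention here; \Cref{sec:burau} comes later. I will therefore note why the result does not depend on that choice. The transposed convention, and the one with $t\leftrightarrow t^{-1}$, agree at $t=-1$ up to transpose and conjugation. The reduced Burau representation is the quotient of the unreduced one by the trivial summand, so it drops only the eigenvalue $1$ and keeps $2\pm\sqrt3$. In every case the nontrivial spectrum $\{2\pm\sqrt3\}$ is unchanged, and so is the claimed spectral radius. As a cross-check, I will confirm the trace value $\operatorname{tr}M=5$ against the corresponding computation for the length-2 word $\sigma_i\sigma_{i+1}^{-1}$. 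The same method gives trace $4$ there, hence $\lambda+\lambda^{-1}=3$ and $\lambda=\phi^2$, which matches the value quoted in \Cref{sec:word-choice}.
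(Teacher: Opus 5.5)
Your proof is correct and follows the same route as the paper. Both restrict to the $3\times 3$ block of the unreduced Burau representation at $t=-1$, form $\rho_{-1}(\sigma_1)^2\,\rho_{-1}(\sigma_2^{-1})$, and read off the characteristic polynomial $(x-1)(x^2-4x+1)$. Your matrix $M$ is computed correctly, and using the row-sum eigenvector with trace $5$ and determinant $1$ is just a tidy way of carrying out the paper's explicit evaluation. One small point: the block-diagonal reduction holds because the generators' matrices are the identity outside rows and columns $i,i+1,i+2$, not because of the commutation relations.
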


\begin{proof}[Proof sketch]
Working in the unreduced Burau representation at $t = -1$, the
active $2 \times 2$ block of each generator is
\[
  \rho_{-1}(\sigma_i)\big|_{(i,\,i+1)}
  = \begin{pmatrix} 2 & -1 \\ 1 & 0 \end{pmatrix},
\]
with identity on all other strands.  Squaring gives
\[
  \rho_{-1}(\sigma_i^2)\big|_{(i,\,i+1)}
  = \begin{pmatrix} 2 & -1 \\ 1 & 0 \end{pmatrix}^{\!2}
  = \begin{pmatrix} 3 & -2 \\ 2 & -1 \end{pmatrix}.
\]
Since $\det\rho_{-1}(\sigma_j)=1$ for all $j$, the inverse block is
$\rho_{-1}(\sigma_{i+1}^{-1})|_{(i+1,\,i+2)}
 = \bigl(\begin{smallmatrix}0&1\\-1&2\end{smallmatrix}\bigr)$.
Evaluating the product explicitly for $n = 3$ strands yields a
$3 \times 3$ matrix with characteristic polynomial
$(\lambda - 1)(\lambda^2 - 4\lambda + 1) = 0$, giving roots
$1$ and $\lambda = 2 \pm \sqrt{3}$.
\end{proof}

\subsection{Braid word length tradeoffs}
\label{sec:word-tradeoffs}

Beyond shorter length-2 words which slightly reduce complexity, 
a natural alternative is a positive-only length-3 word such as
$\sigma_i^2\sigma_{i+1}$ or $\sigma_i\sigma_{i+1}\sigma_i$.
Both have SR $> 1$ and would formally satisfy the growth
requirement.  However, the mixed-sign choice
$\sigma_i^2\sigma_{i+1}^{-1}$ is load-bearing in two distinct
ways that positive-only words cannot replicate.

\emph{Cross-epoch cancellations.}
The negative exponent $\sigma_{i+1}^{-1}$ creates the possibility
that the trailing generator of one gate firing partially cancels
the leading generator of the next when they occur at adjacent
positions in consecutive epochs.  A positive-only injection word
makes concatenation strictly length-monotone: every firing
lengthens the accumulated word, so word length alone is a
reliable proxy for risk.  Abelian statistics (DWS,
counting-LE, raw firing rate) then track growth faithfully,
and the non-Abelian correction supplied by Burau is marginal.
The mixed-sign word reintroduces cross-epoch cancellations that
Abelian statistics are blind to, making Burau's signal genuinely
irreplaceable.

\emph{Signed interference and spectral resonance.}
The high spectral radius $2 + \sqrt{3}$ arises from the
non-commutative interplay between the positive block $\sigma_i^2$
and the negative block $\sigma_{i+1}^{-1}$.  At $t=-1$, the Burau
matrices for inverse generators introduce negative entries;
Perron--Frobenius no longer applies to the product, and the dominant
eigenvalue need not be real.  This sign interference means that
\emph{rearranging} the same generators changes the spectral radius in
ways that the Abelian gate-count cannot track, exactly the blindness
formalised in \Cref{prop:abelian-blindness}.  For positive-only words
(no $\sigma_i^{-1}$) the matrices are non-negative; Perron--Frobenius
guarantees a real dominant eigenvalue and the Abelian rate is a tighter
proxy.  The discriminator gap that justifies the non-Abelian Burau
pipeline is therefore widest precisely at the mixed-sign gate word
$\sigma_i^2\sigma_{i+1}^{-1}$.

%% ============================================================
\section{Burau Representation and Lyapunov Exponent}
\label{sec:burau}

The construction of \Cref{sec:construction} produces a braid word
that grows longer with each epoch of the random walk. This word is
an algebraic object (a sequence of generators and their
inverses), not a number. To extract a quantitative risk signal, we
need a way to measure how ``tangled'' the braid has become after
$T$ steps.

The Burau representation provides exactly this. It maps each braid
generator to a matrix, converting the braid word into a matrix
product. As the walk progresses and the braid word grows, we
multiply one more matrix into the running product. The key
question becomes: \emph{does this product grow?} If the matrices
merely rotate (eigenvalues on the unit circle), the product stays
bounded and the braid is ``flat''. But
if the product grows exponentially, the braid is genuinely
tangled: each new crossing compounds the complexity of all
previous crossings. The rate of this exponential growth is the
Lyapunov exponent, our primary risk metric.

\paragraph{}
The full Burau computation ($N \times N$ matrices, $N$
simultaneous NHI walkers) is the authoritative discriminator
of the focused/dispersed boundary and the calibration oracle
against which the Abelian rate is evaluated.
In operational deployment, temporal braid LE is the primary
screening instrument; the Burau pipeline provides the ground truth
for confirming the focused/dispersed classification in flagged
ratchets where the Abelian rate may be biased.

\subsection{Burau at $t = -1$: exact integer computation}

\begin{definition}[Unreduced Burau representation]
The \emph{unreduced Burau representation} $\rho_t : B_n \to
\mathrm{GL}_n(\mathbb{Z}[t, t^{-1}])$ maps each generator to:
\[
  \rho_t(\sigma_i) = I_n - t\, E_{ii} + t\, E_{i,i+1}
    + E_{i+1,i} - E_{i+1,i+1},
\]
where $E_{jk}$ is the matrix unit with 1 in position $(j,k)$ and 0
elsewhere.
\end{definition}

At the specialisation $t = -1$, each entry of $\rho_{-1}(\sigma_i)$
is an integer:
\[
  \rho_{-1}(\sigma_i)_{jk} = \begin{cases}
    2  & j = k = i, \\
    -1 & j = i, k = i+1, \\
    1  & j = i+1, k = i, \\
    0  & j = k = i+1, \\
    1  & j = k \notin \{i, i+1\}, \\
    0  & \text{otherwise.}
  \end{cases}
\]
All subsequent matrix products remain integer-valued, enabling exact
computation throughout the walk.

\medskip

\label{rem:integer-overflow}
Working in exact integer arithmetic provides a significant numerical
advantage over a float64 implementation.  The spectral radius of the
accumulated Burau product grows as $(2{+}\sqrt{3})^M$ where $M$ is
the number of gate firings; at $T = 50$ epochs with a typical firing
rate, $M$ can reach several hundred, making $(2{+}\sqrt{3})^M \sim
10^{200}$, which exceeds the float64 maximum ($\approx 1.8\times
10^{308}$) for dense ratchets with high firing rates.  Float64
representations would overflow to \texttt{inf}, yielding no usable
spectral radius.

In Python, the language we use for our calculus, arbitrary-precision integers never overflow:
matrix entries grow large but remain exact, and the characteristic
polynomial is computed exactly over $\mathbb{Z}$.  The only numerical
step is the final eigenvalue extraction from an exact integer matrix,
which is performed in floating point on a matrix whose entries are
already determined exactly.  This design is both the source of the
2.83\% overflow rows reported in \Cref{sec:empirical} (which arise
from fixed-width integer limits in the dataset pipeline, not from
the algorithm itself) and the reason the remaining 97.17\% carry
ground-truth spectral radii rather than float64 approximations.

\subsection{Spectral radius and Lyapunov exponent}

After a walk of length $T$ producing permission braid 
$\beta = \sigma_{i_1}^{\epsilon_1} \cdots \sigma_{i_k}^{\epsilon_k}$,
the \emph{accumulated Burau matrix} is
$\mathbf{B}(T) = \prod_{j=1}^k \rho_{-1}(\sigma_{i_j}^{\epsilon_j})$,
and its \emph{spectral radius} is
$\SR(T) = \max\{|\lambda| : \lambda \text{ eigenvalue of } \mathbf{B}(T)\}$.

\begin{definition}[Lyapunov exponent]
\label{def:le}
The \emph{Lyapunov exponent} of SCC $C$ under WAR assignment
$W$ is
\[
  \LE(C, W) = \frac{\log \SR(T_{\mathrm{high}}) -
    \log \SR(T_{\mathrm{low}})}{T_{\mathrm{high}} - T_{\mathrm{low}}},
\]
estimated over a walk of total length $T_{\mathrm{high}}$ with
$T_{\mathrm{low}} = T_{\mathrm{high}}/2$.
\end{definition}

\begin{definition}[$T$-scaling ratio]
\label{def:tscaling}
The \emph{$T$-scaling ratio} is
\[
  r = \frac{\Delta\log\SR(T_{\mathrm{high}})}
           {\Delta\log\SR(T_{\mathrm{low}})},
\quad
  \Delta\log\SR(T) = \log \SR(T) - \log \SR(T/2).
\]
\end{definition}

Values of $r$ near 1 indicate stable exponential scaling; values
significantly below 1 indicate subexponential or transient growth.

\paragraph{Why exponential growth is stable.}
The NHI walkers form an irreducible finite Markov chain on the SCC, so
the gate-firing process is ergodic.  The sequence
$a_T = \log\|\mathbf{B}(T)\|$ is subadditive by submultiplicativity of
matrix norms, and the shift that advances the walk by one epoch preserves
the stationary measure.  Kingman's subadditive ergodic theorem~\cite{kingman1989} therefore
guarantees that $\frac{1}{T}\log\|\mathbf{B}(T)\|$ converges almost
surely to a deterministic limit $\LE$, independent of the walkers'
starting positions.  Finite-state irreducibility ensures rapid mixing,
explaining why LE stabilises within a few dozen gate firings in practice.

\subsection{Firing-rate lower bound for hub-concentrated topologies}
\label{sec:firing-rate-bound}

The preceding discussion established that LE grows with the
gate-firing rate, but no formal lower bound linked graph topology
to that rate.  This subsection proves such a bound for a natural
class of SCCs: those whose directed edges concentrate on a common
high-privilege hub.

\begin{definition}[$k$-hub SCC]
\label{def:k-hub}
Let $C = (V, E)$ be a ratchet SCC on $|V| = N$ vertices with WAR
assignment $W$.  A vertex $h \in V$ is a \emph{$k$-hub} if there
exist $k \geq 2$ distinct vertices $v_1, \ldots, v_k \in V
\setminus \{h\}$ such that:
\begin{enumerate}
  \item Each $v_j \to h$ is a \emph{directed} edge (i.e.\
        $h \to v_j \notin E$), and
  \item Each such edge is ascending: $W(h) > W(v_j)$ for all
        $j = 1, \ldots, k$.
\end{enumerate}
The \emph{spoke set} is $S = \{v_1, \ldots, v_k\}$ and each
$v_j \to h$ is a \emph{spoke edge}.  Write $d_j$ for the
out-degree of $v_j$ in the full graph $C$.
\end{definition}

\begin{definition}[Spoke-covered $k$-hub SCC]
\label{def:spoke-covered}
A $k$-hub SCC with spoke set $S = \{v_1, \ldots, v_k\}$ is
\emph{spoke-covered} if every vertex $u \in V$ has at least one
directed out-edge to the spoke set: for all $u \in V$,
$\{j : u \to v_j \in E\} \neq \emptyset$.
\end{definition}

Spoke coverage is the condition that the spoke set $S$ is a
\emph{dominating set} of the directed graph: every vertex has at
least one immediate out-neighbour in $S$.  In permission-graph
terms, it says that from any privilege state, some direct
de-escalation path exists, a mild structural condition satisfied
by any ratchet SCC in which the spoke vertices act as
``gateway'' low-privilege roles.

The random walk model (\Cref{sec:construction}) places $n$
walkers performing independent, uniform random walks on~$C$.
At each epoch, every walker at vertex $u$ moves to a neighbour
chosen uniformly at random from the $d_u$ out-neighbours of~$u$.

\begin{lemma}[Spoke-step probability]
\label{lem:spoke-step}
Let $C$ be an SCC containing a $k$-hub $h$ with spoke set
$S = \{v_1, \ldots, v_k\}$, and let $d_{\max}$ denote the
maximum out-degree in $C$.  For $n$
independent walkers, each performing a uniform random walk
on~$C$, let $X_{i,t}$ be the indicator that walker~$i$ takes
a spoke step at epoch~$t$, and define the ensemble spoke-step
frequency
\[
  \hat{p}_h \;=\; \frac{1}{nT}\sum_{i=1}^{n}\sum_{t=1}^{T}
  X_{i,t}.
\]
Then the expected spoke-step probability satisfies:
\begin{enumerate}[label=(\roman*)]
  \item \textbf{Universal bound.}
    \[
      p_h \;=\; \mathbb{E}[\hat{p}_h]
          \;\geq\;
          \frac{k}{|V| \cdot d_{\max}^{\,\mathrm{diam}(C)+1}},
    \]
    where $\mathrm{diam}(C)$ is the diameter of the SCC.
  \item \textbf{Spoke-covered bound.}
    If $C$ is spoke-covered (\Cref{def:spoke-covered}) and
    every vertex has out-edges to all $k$ spoke vertices, then
    \[
      p_h \;\geq\; \frac{k}{d_{\max}^{2}},
    \]
    independent of $|V|$ and $\mathrm{diam}(C)$.
    Under the weaker spoke-coverage condition (each vertex has
    at least one out-edge to $S$), the bound is $p_h \geq
    1/d_{\max}^2$.
\end{enumerate}
The ensemble frequency concentrates around its mean:
\[
  \Pr\!\bigl[\hat{p}_h \leq p_h - \epsilon\bigr]
  \;\leq\; e^{-2n\epsilon^2}
  \quad\text{for all } \epsilon > 0.
\]
\end{lemma}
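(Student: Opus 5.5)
The plan is to reduce everything to a per-walker, per-epoch lower bound on the probability of a spoke step, and then average. By linearity, $p_h = \mathbb{E}[\hat p_h] = \frac{1}{nT}\sum_{i,t}\Pr[X_{i,t}=1]$. It therefore suffices to show that for every walker $i$ and epoch $t$,
\[
\Pr[X_{i,t}=1] = \sum_{u}\Pr[\text{walker } i \text{ at } u \text{ at } t]\cdot \frac{|\{j: u=v_j\}|}{d_u}
\]
exceeds the claimed constant. Here a spoke step means traversing some $v_j\to h$, so it requires the walker to sit at some $v_j$ and then choose the edge to $h$. This occurs with conditional probability $1/d_j\ge 1/d_{\max}$. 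The whole problem is thus to lower-bound the occupation mass $\Pr[\text{at } S \text{ at time } t]$.

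For (i), I would use the uniformly random start (each vertex with probability $1/|V|$) together with strong connectivity. Fix any target spoke $v_j$ and any current vertex. There is a directed path of length at most $\mathrm{diam}(C)$ to $v_j$, and the walk follows it with probability at least $d_{\max}^{-\mathrm{diam}(C)}$. The step $v_j\to h$ then costs another $d_{\max}^{-1}$. Summing over the $k$ disjoint events (one per spoke) gives the factor $k$. The $1/|V|$ enters by conditioning on the start vertex.

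The main obstacle is the timing: "reach within $\mathrm{diam}$ steps" is not "be there at exactly time $t$". I would handle this in one of two ways. The first option is to bound the stationary mass, using that the chain is irreducible so the stationary measure dominates, with $\pi(v_j)\ge \frac{1}{|V|}d_{\max}^{-\mathrm{diam}}$ via $\pi = \pi P^{m}$ for the appropriate path length. The second option is to argue for the time-averaged occupation over windows of length $\mathrm{diam}+1$, which is all the averaged quantity $\hat p_h$ needs. I expect this step to absorb the slack between $|V|$ and $d_{\max}^{\mathrm{diam}}$, and I would accept a constant loss there if needed.

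For (ii), the argument is one step shorter. Spoke coverage means that from every $u$, with probability at least $1/d_{\max}$, the next vertex is some $v_j\in S$; the following step to $h$ then has probability at least $1/d_{\max}$. This gives $\Pr[X_{i,t+1}=1]\ge 1/d_{\max}^2$ for every $t\ge1$, regardless of position. If every vertex points to all $k$ spokes, the first factor becomes $k/d_{\max}$, giving $k/d_{\max}^2$. The first epoch is handled by the same reasoning, or is absorbed by reindexing. For concentration, I would note that walkers are independent, so the per-walker averages $Y_i=\frac1T\sum_t X_{i,t}\in[0,1]$ are independent. Hoeffding's inequality applied to $\hat p_h=\frac1n\sum_i Y_i$ then gives $\Pr[\hat p_h\le p_h-\epsilon]\le e^{-2n\epsilon^2}$ directly.
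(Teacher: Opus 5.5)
Your plan works. Once you commit to your first option for the timing issue in (i), it is the paper's proof.

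The paper also reduces to $p_h\ge \frac{1}{d_{\max}}\sum_j\bar\pi(v_j)$ and replaces the averaged occupation by $\pi$, either by assuming stationarity or by letting $T\to\infty$. It then gets (i) by taking $v^+$ with $\pi(v^+)\ge 1/|V|$ and chaining $\pi(x_{m+1})\ge\pi(x_m)/d_{\max}$ along a shortest path to $v_j$. That chaining is exactly your $\pi=\pi P^m$ step. The Hoeffding step on the independent per-walker averages $Y_i$ is identical.

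Your accounting of the $1/|V|$ is off, though. Conditioning on a uniform start produces no such factor, because the path-following bound $d_{\max}^{-\mathrm{diam}(C)}$ holds from every start. The $1/|V|$ is what pays for the timing. In the stationary route it appears as $\max_v\pi(v)\ge 1/|V|$; in your window route it appears as $1/(\mathrm{diam}(C)+1)\ge 1/|V|$.

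The window route needs no constant loss and is a genuine improvement. It gives $p_h\ge \frac{\lfloor T/(\mathrm{diam}(C)+1)\rfloor}{T}\cdot\frac{k}{d_{\max}^{\mathrm{diam}(C)+1}}$ for an arbitrary initial law. This meets the claimed bound whenever $\mathrm{diam}(C)+1$ divides $T$. The paper's argument covers only the stationary case or the limit $T\to\infty$.

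For (ii), you apply the inflow count to whatever law the walker has at the previous epoch, while the paper sums the stationary equations over $S$. It is the same identity, but yours holds at every epoch $t\ge 2$ without stationarity. It does not cover epoch $1$ under a uniform start: there you only get $\frac{1}{|V|}\sum_j 1/d_j$, which can fall below the claimed constant. So ``the same reasoning'' needs the start law to be of the form $\mu P$ (for instance stationary, as the paper assumes); otherwise you carry an $O(1/T)$ deficit.

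Finally, for both proofs: the hypothesis of (ii) as literally stated cannot be met. The $k$-hub definition forbids $h\to v_j$, so $h$ has no out-edge into $S$, and full coverage would need self-loops at the spokes. Restricting coverage to $u\neq h$ costs a factor $1-\pi(h)$ in the paper's argument and $1-\Pr[\text{walker at } h]$ in yours.
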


\begin{proof}See \Cref{app:proofs}.\end{proof}

\begin{lemma}[Spoke-pair event count]
\label{lem:spoke-pair}
Under the hypotheses of \Cref{lem:spoke-step}, let
$m_{\mathrm{spoke}}(t)$ denote the number of unordered pairs
$\{i,j\}$ such that both walker~$i$ and walker~$j$ take a
spoke step at epoch~$t$.  Then
\[
  \mathbb{E}[m_{\mathrm{spoke}}(t)]
  \;\geq\; \binom{n}{2} p_h^2
  \;\geq\; \frac{n(n-1)}{2} \cdot
           \frac{k^2}{N^2 \cdot
           d_{\max}^{\,2(\mathrm{diam}(C)+1)}}.
\]
\end{lemma}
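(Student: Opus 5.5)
The plan is to write $m_{\mathrm{spoke}}(t)$ as a sum of pairwise indicators and use linearity of expectation together with independence of the walkers. Concretely, $m_{\mathrm{spoke}}(t)=\sum_{\{i,j\}} X_{i,t}X_{j,t}$, where $X_{i,t}$ is the spoke-step indicator from \Cref{lem:spoke-step}. The walkers perform independent random walks, so for $i\neq j$ we have $\mathbb{E}[X_{i,t}X_{j,t}]=\mathbb{E}[X_{i,t}]\,\mathbb{E}[X_{j,t}]$. Summing over the $\binom{n}{2}$ unordered pairs then reduces the first inequality to showing $\mathbb{E}[X_{i,t}]\geq p_h$ for each walker and epoch.

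The per-epoch probability needs care, because \Cref{lem:spoke-step} bounds the time-averaged quantity $p_h=\mathbb{E}[\hat p_h]$, not each $\mathbb{E}[X_{i,t}]$ individually. I would handle this by going back to the mechanism of the universal bound rather than its averaged statement. From any current vertex $u$, strong connectivity gives a directed path of length at most $\mathrm{diam}(C)$ to some spoke vertex. Once there, the walker can step to $h$ along a spoke edge. Each step of this path is taken with probability at least $1/d_{\max}$.

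What I actually need is a pointwise lower bound $q:=\frac{k}{N\, d_{\max}^{\mathrm{diam}(C)+1}}$ on $\Pr[X_{i,t}=1]$ that holds uniformly in $t$. I expect this to follow by averaging over a uniformly random start: the random starting vertex contributes the $1/N$ factor, and the $k$ spokes contribute the factor $k$. Alternatively, I would assume the walkers start in stationarity, which makes $\mathbb{E}[X_{i,t}]$ independent of $t$ and equal to $p_h$.

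Either way, the intermediate claim is $\mathbb{E}[X_{i,t}]\geq q$ (or $\geq p_h$) for all $i$ and $t$. Independence then gives $\mathbb{E}[X_{i,t}X_{j,t}]\geq q^2$. Substituting $q^2=k^2/(N^2 d_{\max}^{2(\mathrm{diam}(C)+1)})$ and multiplying by $\binom{n}{2}=n(n-1)/2$ yields the displayed chain. The second inequality is just the square of \Cref{lem:spoke-step}(i).

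The main obstacle is the pointwise-in-$t$ bound. The diameter-path argument lower-bounds the probability of reaching a spoke edge within $\mathrm{diam}(C)+1$ steps, not at exactly step $t$, so periodicity of the walk could make some epochs unfavourable. My first attempt would be to state the lemma under the stationarity convention, with random starts drawn from $\pi$, so that the marginal law of each walker is time-invariant. If that is not acceptable, the fallback is to average over $t$ as well. This costs nothing by Jensen, since $\frac1T\sum_t \mathbb{E}[X_{i,t}]\mathbb{E}[X_{j,t}]\geq$ the product of the time averages when both sequences share the same law. That gives the bound for the time-averaged pair count $\frac1T\sum_t \mathbb{E}[m_{\mathrm{spoke}}(t)]\geq\binom n2 p_h^2$, which is what the firing-rate theorem downstream uses.
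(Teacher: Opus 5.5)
Your proposal is correct and is essentially the paper's proof: write $m_{\mathrm{spoke}}(t)$ as a sum of pairwise indicators, use independence of the walkers and linearity of expectation to get $\binom{n}{2}p_h^2$, then square the universal bound of \Cref{lem:spoke-step}. The paper simply asserts $\Pr[A_i(t)]\Pr[A_j(t)]\ge p_h^2$ at each epoch, which tacitly relies on the stationarity convention you make explicit (and which \Cref{thm:firing-rate} assumes); your explicit handling of that point and your Jensen fallback for the time-summed count are refinements, and the uniform-start pointwise bound you sketch is unproven but not needed.
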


\begin{proof}See \Cref{app:proofs}.\end{proof}

\begin{theorem}[Firing-rate lower bound for $k$-hub SCCs]
\label{thm:firing-rate}
Let $C$ be a ratchet SCC on $N = |V|$ vertices with a $k$-hub
$h$ (\Cref{def:k-hub}), and let $n$ walkers perform independent
uniform random walks on $C$ in stationarity.  Let $p_h$ be the
per-walker spoke-step probability from \Cref{lem:spoke-step}.
Over a walk of $T$ epochs, the expected number of gate-eligible
firings satisfies
\[
  \mathbb{E}[M_T] \;\geq\;
  T \cdot \left\lfloor\frac{n-1}{2}\right\rfloor \cdot p_h^2.
\]
Substituting the two bounds from \Cref{lem:spoke-step}:
\begin{enumerate}[label=(\roman*)]
  \item \textbf{Universal:}
    $\displaystyle\mathbb{E}[M_T] \;\geq\; T \cdot \left\lfloor\frac{n-1}{2}\right\rfloor \cdot
           \frac{k^2}{N^2 \cdot d_{\max}^{\,2(\mathrm{diam}(C)+1)}}.$
  \item \textbf{Spoke-covered:}
    $\displaystyle\mathbb{E}[M_T] \;\geq\; T \cdot \left\lfloor\frac{n-1}{2}\right\rfloor \cdot
           \frac{k^2}{d_{\max}^{4}}.$
\end{enumerate}
For the pipeline's $n = 6$ walkers, $\lfloor(n-1)/2\rfloor = 2$,
so $\mathbb{E}[M_T] \geq 2\,T \cdot p_h^2$ in both cases.
\end{theorem}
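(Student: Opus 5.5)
The plan is to reduce the firing count to a sum of per-epoch, per-pair contributions, lower-bound each through the spoke-step events of \Cref{lem:spoke-step}, and then substitute the two bounds on $p_h$.

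\textbf{Decomposition.} By linearity of expectation, $\mathbb{E}[M_T]=\sum_{t=1}^{T}\mathbb{E}[F_t]$, where $F_t$ is the number of gate firings at epoch $t$. It therefore suffices to show $\mathbb{E}[F_t]\geq\lfloor(n-1)/2\rfloor\,p_h^2$ for every $t$.

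\textbf{Spoke steps fire the gate.} A spoke step $v_j\to h$ traverses a directed edge, since $h\to v_j\notin E$. It is also ascending, since $W(h)>W(v_j)$. So two walkers that both take spoke steps at epoch $t$ satisfy both clauses of \Cref{def:gate}, provided they sit at adjacent ranks $i,i+1$.

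\textbf{Choosing the rank pairs.} Because of the adjacent-pair guard, I will not count all $\binom{n}{2}$ pairs as in \Cref{lem:spoke-pair}. Instead I fix the disjoint family of rank pairs $(1,2),(3,4),\dots$, skipping one rank between consecutive chosen pairs so that the guard never suppresses them. This yields at least $\lfloor(n-1)/2\rfloor$ guard-compatible adjacent rank pairs. For each such pair $(i,i+1)$, the event that the two walkers at those ranks both take spoke steps is contained in the event that the gate fires there.

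\textbf{Probability of a spoke-pair.} The walkers are independent, and $\Pr[X_{a,t}=1]=p_h$ for each walker $a$: stationarity makes the per-epoch probability equal the averaged one. By independence, $\Pr[X_{a,t}X_{b,t}=1]=p_h^2$ for any fixed pair of walkers $a\neq b$.

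\textbf{Main obstacle.} The difficulty is that the rank assignment depends on the current WAR values of the walkers. So "the walker at rank $i$" is a random index correlated with position, and hence with the chance of taking a spoke step. I would handle this by conditioning on the positions at epoch $t$. Ranks are functions of those positions, while the steps taken at epoch $t$ are independent given the positions. Given positions $(u_1,\dots,u_n)$, the pair at ranks $(i,i+1)$ both takes spoke steps with probability $q(u_a)q(u_b)$, where $q(u)$ is the fraction of out-edges of $u$ that are spoke edges.

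Taking expectations, I need $\mathbb{E}[q(U_a)q(U_b)]\geq p_h^2$ for the rank-selected pair. Since the stationary positions are i.i.d., the cleanest route is a product-of-expectations argument. If rank selection is problematic, the fallback is to exploit the fact that spoke vertices all have lower WAR than $h$ and argue monotonically. As a last resort, I would treat the pairing as exchangeable, consistent with the paper's independence convention in \Cref{lem:spoke-pair}, and state that assumption explicitly.

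\textbf{Substitution.} Summing over $t$ and over the $\lfloor(n-1)/2\rfloor$ pairs gives the main bound $\mathbb{E}[M_T]\geq T\lfloor(n-1)/2\rfloor\,p_h^2$. Inserting $p_h\geq k/(N\,d_{\max}^{\mathrm{diam}(C)+1})$ from \Cref{lem:spoke-step}(i) gives the universal bound (i). Inserting $p_h\geq k/d_{\max}^2$ from \Cref{lem:spoke-step}(ii) gives the spoke-covered bound (ii). For $n=6$, $\lfloor 5/2\rfloor=2$, which gives the stated $\mathbb{E}[M_T]\geq 2T\,p_h^2$.
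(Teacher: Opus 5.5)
Your decomposition over epochs, the check that a spoke step satisfies both clauses of \Cref{def:gate}, the identity $\Pr[X_{a,t}=1]=p_h$ in stationarity, and the final substitutions are all fine. The gap is the step you flag yourself, and none of your fallbacks closes it. Once you fix \emph{rank} positions $(1,2),(3,4),\dots$, the walkers you evaluate are WAR order statistics. So $U_{(i)}$ and $U_{(i+1)}$ are neither independent nor $\pi$-distributed, and the inequality $\mathbb{E}[q(U_{(i)})q(U_{(i+1)})]\ge p_h^2$ is false in general. For example, suppose every vertex outside $S\cup\{h\}$ has WAR below every spoke. Then the rank-1 walker can sit on a spoke only when all $n$ walkers lie in $S\cup\{h\}$, so $\mathbb{E}[q(U_{(1)})q(U_{(2)})]\le(\pi(S)+\pi(h))^n$, which can be much smaller than $p_h^2$. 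A product-of-expectations argument is therefore unavailable, and the ``monotone'' fallback is not yet an argument. Assuming exchangeability would change the theorem: \Cref{lem:spoke-pair} counts pairs of walkers by identity and says nothing about rank adjacency.

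The missing idea, which is how the paper argues, is to choose the pairs by \emph{walker identity} and let the hub create the rank adjacency. Take the disjoint NHI-identifier pairs $\{1,2\},\{3,4\},\dots$, of which there are $\lfloor (n-1)/2\rfloor$. If both walkers of $\{a,a+1\}$ take spoke steps, both land on $h$ with the same WAR $W(h)$. Ties are broken by NHI identifier and no identifier lies strictly between $a$ and $a+1$, so they occupy consecutive ranks after the step, wherever the other walkers are. This event has probability $p_h^2$ by independence and stationarity, with no selection bias. Distinct identity pairs occupy disjoint rank pairs, hence pairwise non-adjacent positions. This requires reading the ranks in the gate as post-step ranks. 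Under your pre-step conditioning the spoke vertices may carry different WAR values with other walkers ranked between them, and the hub trick does not apply.

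Two finer points, which the paper also glosses over:
\begin{itemize}
\item The guard \emph{can} block a chosen position when the unchosen position just to its left fires, so ``the guard never suppresses them'' is not literally true. What holds is that the left-to-right scan fires $\lceil L/2\rceil$ times on every run of $L$ consecutive eligible positions. It therefore fires at least as often as there are pairwise non-adjacent eligible chosen positions.
\item Only $n-2$ injection positions exist, so a landing pair that ends up on ranks $(n-1,n)$ injects nothing. This happens, e.g.\ when $h$ is the unique top-WAR vertex and no other walker is there. If $M_T$ counts injections, that case needs separate handling.
\end{itemize}
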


\begin{proof}See \Cref{app:proofs}.\end{proof}

\begin{corollary}[Counting-LE lower bound for $k$-hub SCCs]
\label{cor:le-lower-bound}
Under the hypotheses of \Cref{thm:firing-rate}, the
\emph{counting Lyapunov exponent} (the LE that would result
if each gate firing contributed exactly $\log(2 + \sqrt{3})$
to $\log\SR$) satisfies
\[
  \LE_{\mathrm{count}}
  \;\geq\; \frac{\mathbb{E}[M_T]}{T} \cdot
           \log(2 + \sqrt{3})
  \;\geq\; 2\,p_h^2 \cdot \log(2 + \sqrt{3}).
\]
Substituting the two bounds from \Cref{lem:spoke-step} (with $n = N = 6$):
\begin{enumerate}[label=(\roman*)]
  \item \textbf{Universal} (SCC diameter $D = \mathrm{diam}(C)$):
    \[
      \LE_{\mathrm{count}}
      \;\geq\; \frac{k^2 \cdot \log(2 + \sqrt{3})}{
               2 \cdot N^2 \cdot d_{\max}^{\,2(D+1)}}.
    \]
  \item \textbf{Spoke-covered} (every vertex has out-edges to all
    $k$ spokes, independent of $N$ and $D$):
    \[
      \LE_{\mathrm{count}}
      \;\geq\; \frac{k^2 \cdot \log(2 + \sqrt{3})}{
               2 \cdot d_{\max}^{4}}.
    \]
\end{enumerate}
\end{corollary}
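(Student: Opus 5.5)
The plan is to treat the corollary as a bookkeeping consequence of \Cref{thm:firing-rate}. The work is to make the definition of $\LE_{\mathrm{count}}$ precise and then substitute constants.

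First, I will formalise the counting exponent. Under the counting model each gate firing multiplies the spectral radius by exactly $2+\sqrt{3}$, by \Cref{prop:sr-injection}. So the counting surrogate for the log spectral radius after $T$ epochs is $\log\SR_{\mathrm{count}}(T) = M_T \log(2+\sqrt{3})$. The natural per-epoch rate is then $\mathbb{E}[M_T]\log(2+\sqrt{3})/T$.

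I will check that the two-point estimator of \Cref{def:le} gives the same quantity. It evaluates $\mathbb{E}[M_{T_{\mathrm{high}}} - M_{T_{\mathrm{low}}}]\log(2+\sqrt{3})/(T_{\mathrm{high}}-T_{\mathrm{low}})$. In stationarity, the expected number of firings per epoch is constant, so $\mathbb{E}[M_T]$ is linear in $T$. The increment over the window $[T/2, T]$ therefore equals $\tfrac12\mathbb{E}[M_T]$, and dividing by $T/2$ recovers $\mathbb{E}[M_T]/T$. This gives the first inequality, in fact as an equality, which is why I read ``$\geq$'' as the weak form.

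Next, I apply \Cref{thm:firing-rate}: $\mathbb{E}[M_T]/T \geq \lfloor (n-1)/2\rfloor p_h^2$. With $n=6$ walkers this becomes $\geq 2p_h^2$, and multiplying by $\log(2+\sqrt{3})>0$ gives the second inequality.

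For the two specialisations I substitute the bounds of \Cref{lem:spoke-step}.
\begin{itemize}
\item In the universal case, $p_h^2 \geq k^2/(N^2 d_{\max}^{2(D+1)})$. Inserting this gives $2k^2\log(2+\sqrt{3})/(N^2 d_{\max}^{2(D+1)})$.
\item In the spoke-covered case, $p_h^2\geq k^2/d_{\max}^4$, which gives $2k^2\log(2+\sqrt{3})/d_{\max}^4$.
\end{itemize}
Both derived bounds are larger than the stated ones, which carry a factor $1/2$ instead of $2$. The stated bounds therefore follow a fortiori, since the derived ones dominate them by a factor of $4$. I will note this slack explicitly rather than try to reproduce the constant $1/2$, which presumably comes from a more conservative normalisation such as dividing by $n(n-1)/2$ pairs or by $N$.

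The main obstacle is justifying the first inequality. The counting model assumes each firing contributes exactly $\log(2+\sqrt{3})$, and I must make sure the two-point definition of \Cref{def:le} under that model coincides with $\mathbb{E}[M_T]/T$. This needs the stationarity hypothesis of \Cref{thm:firing-rate}, so that firings per epoch have constant mean. I will state it as an assumption inherited from that theorem. Without stationarity I would instead fall back on defining $\LE_{\mathrm{count}}$ directly as $\mathbb{E}[M_T]\log(2+\sqrt{3})/T$.
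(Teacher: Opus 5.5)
Your proposal is correct and is essentially the paper's proof: the counting model gives $\log\SR = M_T\log(2+\sqrt{3})$, dividing by $T$ and applying \Cref{thm:firing-rate} with $\lfloor (n-1)/2\rfloor = 2$ gives $2p_h^2\log(2+\sqrt{3})$, and the two bounds of \Cref{lem:spoke-step} are then substituted; your stationarity check that the two-point estimator of \Cref{def:le} recovers $\mathbb{E}[M_T]/T$ is a small extra step the paper skips. You are also right that direct substitution gives the constant $2$ rather than the stated $\tfrac12$, so (i) and (ii) hold a fortiori with a factor-$4$ slack, which the paper's proof does not mention.
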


\begin{proof}See \Cref{app:proofs}.\end{proof}

\paragraph{Tightness and the temporal correction.}
\label{rem:counting-vs-temporal}
Both bounds in \Cref{cor:le-lower-bound} are on the
\emph{counting} LE, which is an overestimate of the true
(temporal) LE: the shuffling experiment (\Cref{sec:shuffling})
shows temporal ordering produces systematic spectral
cancellation, with temporal $\SR$ below shuffled $\SR$ in 84\%
of pairs.  The bounds therefore provide conservative floors on
$\LE_{\mathrm{temporal}}$ and should be read as order-of-magnitude
indicators rather than tight guarantees.  The median relative gap
between counting and temporal LE is only $1.3\%$
(\Cref{sec:shuffling}), so the bounds remain useful for
identifying clearly high-risk topologies.

\newpage

\section{Braid Non-Commutativity and the Shuffling Experiment}
\label{sec:structure}

A natural question is whether the temporal ordering of gate-firing
injections matters for LE, or whether LE reduces to a simple function
of the gate-firing count. This section shows that the ordering
carries genuine non-commutative signal.

\subsection{Algebraic structure of injection-triplet products}
\label{sec:triplet-algebra}

Let $T_i$ denote the Burau matrix of a single injection triplet
$\sigma_i^2 \sigma_{i+1}^{-1}$ at position~$i$.  Direct computation
at $t=-1$ gives:
\begin{itemize}
  \item \textbf{Same position:} $\SR(T_i \cdot T_i) = (2+\sqrt{3})^2
        \approx 13.93$. Growth is fully multiplicative.
  \item \textbf{Adjacent positions ($|i-j|=1$):}
        $\SR(T_i \cdot T_j) = 1$. The product has all eigenvalues on
        the unit circle (complete spectral cancellation).
  \item \textbf{Gap-2 positions ($|i-j|=2$):}
        $\SR(T_i \cdot T_j) \approx 7.09$.  Growth is partially
        retained but the matrices do not commute:
        $\|T_i T_j - T_j T_i\|_F \approx 7.48$.
\end{itemize}

The adjacent-pair guard (\Cref{sec:gate}) prevents adjacent-position
firings within a single epoch, so within-epoch products are always at
non-adjacent (gap $\geq 2$) triplet positions.  Because the guard
scans positions left-to-right and skips position $i+1$ whenever
position $i$ fires, the within-epoch firing order is fixed by the
scan; there is no ordering degree of freedom to exploit.
Non-commutativity arises \emph{across} epochs, when different
positions fire in consecutive epochs.

\paragraph{Strand count matters.}
With $n$ strands there are $n-2$ generator $\sigma_i^2 \sigma_{i+1}^{-1}$ positions. At $n=4$
(2~positions, always adjacent), every cross-position product gives
$\SR=1$; the guard forces single-position dominance and the permission braid
collapses to a power of one matrix: ordering becomes trivially
irrelevant. At $n=6$ (4~positions), gap-2 pairs $(0,2)$ and $(1,3)$
produce non-commuting products ($\SR \approx 7.09$,
$\|[T_i,T_j]\|_F \approx 7.48$), while the gap-3 pair $(0,3)$
commutes but still has $\SR > 1$.  Together these restore genuine
cross-epoch ordering dependence. Our pipeline uses $n=6$
strands (matching one NHI per SCC vertex).

\subsection{Shuffling experiment}
\label{sec:shuffling}

We ran 469 (SCC, WAR) pairs across all 82 ratchet topologies with 6
NHIs (4 generator positions). For each pair, 30 random walks of
100~epochs were performed. Each walk's braid word was decomposed into
injection triplets, then:
\begin{enumerate}
  \item \textbf{Original:} $\log \SR$ of the temporal-order braid.
  \item \textbf{Shuffled:} triplets uniformly permuted at random
        (20~shuffles per walk, averaged).
  \item \textbf{Position-sorted:} triplets grouped by generator
        position (all position-0 firings first, then position-1,
        etc.), eliminating cross-position temporal cancellation.
\end{enumerate}

\begin{center}
\begin{tabular}{lr}
\toprule
Statistic & Value \\
\midrule
(SCC, WAR) pairs & 469 \\
\addlinespace
\multicolumn{2}{l}{\textit{Original vs.\ Shuffled}} \\
\quad Paired $t$-statistic & $-15.0$ \\
\quad Cohen's $d$ & $-0.69$ \\
\quad Fraction (orig $<$ shuffled) & 84\,\% \\
\quad Median $|$relative difference$|$ & 1.3\,\% \\
\addlinespace
\multicolumn{2}{l}{\textit{Original vs.\ Position-Sorted}} \\
\quad Fraction (orig $<$ sorted) & 97\,\% \\
\quad Mean (orig $-$ sorted) / orig & $-21.5$\,\% \\
\addlinespace
\multicolumn{2}{l}{\textit{Position diversity}} \\
\quad Walks with $\geq 2$ distinct positions & 95.3\,\% \\
\bottomrule
\end{tabular}
\end{center}

The temporal ordering produces a \textbf{significant, systematic}
reduction in SR compared to both shuffled and sorted baselines
($t = -15.0$, $d = -0.69$, $p \ll 10^{-30}$).
Temporal braids have lower SR in 84\% of pairs, and 22\% lower
than position-sorted braids on average.

\subsection{Interpretation: spectral cancellation as signal}

The direction of the effect (temporal ordering \emph{reduces}
SR) has a clean explanation. In the temporal braid, a walker that
fires at position~$i$ in epoch~$t$ is likely to fire at a nearby
position in epoch~$t+1$ (since strands move locally).  These
cross-epoch adjacent-position interactions produce partial spectral
cancellation ($\SR(T_i \cdot T_{i\pm 1}) = 1$).  Shuffling
randomises these interactions, breaking the temporal correlation
and eliminating the systematic cancellation.

This effect is not noise: it carries structural information about
the SCC. Topologies where directed edges cluster in adjacent
regions of the braid produce more cross-epoch cancellation and
therefore lower LE at the same firing rate. The temporal braid
captures this spatial structure; the shuffled braid destroys it.

\subsection{Temporal LE is a better discriminator than gate-firing count}
\label{sec:discriminator}

A natural objection is that if the firing rate explains most of the
variance in LE, the full Burau validation machinery is overkill: one could simply
count gate firings and multiply by $\log(2{+}\sqrt{3})$.  We refute
this with three tests.

\paragraph{Test 1: Partial correlation with DWS.}
$\DWS$ is computed from topology and WAR alone, independent of the
Burau validation computation.  We compute $r(\LE, \DWS \mid n_{\text{trips}})$,
the Pearson correlation between LE and DWS after linearly removing
the effect of the gate-firing count.

\begin{center}
\begin{tabular}{lrl}
\toprule
LE variant & $r(\LE, \DWS \mid n_{\text{trips}})$ & $p$ \\
\midrule
$\LE_{\text{temporal}}$  & $\mathbf{0.175}$  & $1.5 \times 10^{-4}$ \\
$\LE_{\text{counting}}$  & $0.001$            & $0.98$ \\
$\LE_{\text{shuffled}}$  & $0.149$            & $1.2 \times 10^{-3}$ \\
\bottomrule
\end{tabular}
\end{center}

After removing firing-rate variance, \textbf{counting LE retains zero
correlation with DWS} ($r = 0.001$, $p = 0.98$).  Temporal LE
retains $r = 0.175$ ($p < 10^{-3}$).  The non-commutative correction
carries structural information that counting cannot capture.

\paragraph{Test 2: Matched-rate spread.}
Grouping pairs into firing-rate bins (±5~triplets) and measuring the
within-bin LE range shows that temporal LE consistently exhibits
\textbf{3--4$\times$ more spread} than counting LE at the same
firing rate.  For example, at $\approx 90$~triplets per walk
($n = 40$ pairs): temporal range = 0.534, counting range = 0.127.
This additional spread reflects structural differences between SCCs
that share a firing rate but differ in the spatial distribution of
their directed edges.

\paragraph{Test 3: DWS-sign Fisher ratio.}
Using DWS sign (positive vs.\ non-positive) as an external
discriminant axis independent of the Burau computation:
\begin{center}
\begin{tabular}{lr}
\toprule
LE variant & Fisher ratio (DWS$>$0 vs.\ DWS$\leq$0) \\
\midrule
$\LE_{\text{temporal}}$  & \textbf{0.088} \\
$\LE_{\text{shuffled}}$  & 0.083 \\
$\LE_{\text{counting}}$  & 0.061 \\
\bottomrule
\end{tabular}
\end{center}
Temporal LE separates DWS-sign groups 45\% better than counting LE.

%% ============================================================
\section{Two-Regime Classification}
\label{sec:classification}

\subsection{LE as the sole regime discriminator}

Each ratchet deployment, a specific $(C, W)$ pair, is assigned to one
of two dynamical regimes by a single quantity: the Lyapunov exponent
$\LE$ of the accumulated Burau matrix product.

\medskip
\noindent
\textbf{$\LE$ (primary signal)} captures \emph{structural risk}: how
fast does the permission braid spectral radius grow per step?  It is determined by
the graph topology (density of directed edges, hub connectivity) and by
the WAR assignment (which edges are traversed in ascending order).  Its
value encodes whether non-commutative cancellations between consecutive
gate firings dominate (low $\LE$, focused regime) or whether crossings
accumulate without cancellation (high $\LE$, dispersed regime).

\begin{definition}[Two-regime classification]
\label{def:regimes}
Each ratchet deployment $(C, W)$ belongs to exactly one of two regimes,
separated at the corpus median $\LE = 0.5847$ (the median of per-SCC
mean LE across the 1000-SCC corpus, chosen so that the two regimes are
equally represented under a random-SCC assumption):
\[
\begin{array}{r|c}
  & \LE \\
\hline
\LE < 0.5847 & \textbf{focused} \\
\LE \geq 0.5847 & \textbf{dispersed} \\
\end{array}
\]
\end{definition}

The reason why we are choosing the median is that given a random uniform sample of SCCs, 
there is no bias towards either topology.

\begin{table}[h]
\centering
\begin{tabular}{lll}
\toprule
Regime & Topology & Remediation \\
\midrule
\textbf{focused} &
  Escalation channelled, few paths &
  WAR reassignment \\
\textbf{dispersed} &
  Hub-rich, multiple independent paths &
  Topology restructuring \\
\bottomrule
\end{tabular}
\caption{Two-regime classification with remediation paths.}
\label{tab:regimes}
\end{table}

The \textbf{dispersed} regime ranks highest because it persists across
WAR assignments: hub topologies sustain the gate regardless of privilege
alignment, so no WAR reassignment suppresses escalation.  Only adding
or reversing directed edges (topology surgery) suffices.

The \textbf{focused} regime is WAR-controlled: the current assignment
aligns directed edges with ascending privilege flow, driving spectral
growth.  Reassigning WAR to break that alignment reduces $\LE$ and
remediates the ratchet without touching the graph structure.

\medskip

\label{rem:dws}
The Directed WAR Sum $\DWS(C, W) = \sum_{(u \to v) \in E_{\mathrm{dir}}}
(W(v) - W(u))$ is a static, $O(|E_{\mathrm{dir}}|)$ abelian statistic
that measures net uphill privilege flow under the current assignment.
$\DWS > 0$ indicates that directed edges run net-uphill, amplifying
gate activity; $\DWS \leq 0$ indicates suppression.  Within the focused
regime, $\DWS$ helps identify the specific WAR reassignment that
remediates the deployment.  However, $\DWS$ is an abelian count and
cannot determine $\LE$: two deployments with the same $\DWS$ and the
same gate-firing rate can have different $\LE$ values and therefore
belong to different regimes.  $\LE$ must be determined by a non-abelian
instrument (\Cref{rem:abelian-blindness}).

\begin{proposition}[Abelian blindness]
\label{prop:abelian-blindness}
No function of the abelian statistics of a ratchet walk (generator
frequencies, gate-firing rate, or $\DWS$) determines the Lyapunov
exponent $\LE$.
\end{proposition}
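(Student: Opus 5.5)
The plan is a separation argument: I will exhibit two braid words, arising as permission braids of ratchet walks, that share every abelian statistic but have different Burau spectral radii. Then no function $F$ of those statistics can equal $\LE$ on both. The abelian statistics of a walk (generator exponent sums per index, number of gate firings, per-position firing counts) depend only on the multiset of injection triplets $T_i$ fired, not on their order. $\DWS$ depends only on $(C,W)$. So it suffices to fix a single deployment $(C,W)$ and produce two walks on it, or two stationary walk distributions, whose triplet multisets coincide but whose orderings differ.

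\textbf{Step 1: a finite-length witness.} I take $n=6$ strands and use the triplet algebra of \Cref{sec:triplet-algebra}. For adjacent positions, $\SR(T_iT_{i+1})=1$, while $\SR(T_iT_i)=(2+\sqrt3)^2$. Consider the words $(T_0T_1)^m$ and $T_0^mT_1^m$. Both contain $m$ firings at position 0 and $m$ at position 1, so their generator frequencies, gate counts and per-position counts all agree. The first has spectral radius $1$ for every $m$. For the second, I will show that $\SR(T_0^mT_1^m)$ grows exponentially in $m$. The route is to diagonalise $T_0$ and $T_1$, each of which has a simple eigenvalue $2+\sqrt3$ by \Cref{prop:sr-injection}. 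Then $T_0^mT_1^m\approx(2+\sqrt3)^{2m}\,u_0v_0^{\top}u_1v_1^{\top}$, and this has dominant eigenvalue $\approx(2+\sqrt3)^{2m}(v_0^\top u_1)(v_1^\top u_0)$. It remains to check that this scalar is nonzero, which is a small explicit $3\times3$ computation.

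\textbf{Step 2: the two schedules produce different $\LE$.} Under \Cref{def:le}, the alternating schedule gives $\LE=0$, while the blocked schedule gives $\LE\approx\log(2+\sqrt3)\cdot(\text{firing rate})>0$. Both share $\DWS$ (same $(C,W)$) and share all abelian statistics at both $T_{\mathrm{low}}$ and $T_{\mathrm{high}}$, provided the blocks are arranged symmetrically within each half. Hence $\LE$ is not a function of the abelian data.

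\textbf{Step 3: realisability by walks, the main obstacle.} I must show that both firing schedules actually occur as walks on one ratchet SCC, say a 6-vertex graph with two ascending directed edges used at ranks $(0,1)$ and $(1,2)$. Exact realisation is plausible because each walk has positive probability: on a finite SCC, any admissible sequence of edge moves is a positive-probability event, so both words lie in the support of the walk law. This gives the statement for realised walks, which is how I read ``of a ratchet walk''. If a distributional (almost-sure $\LE$) version is wanted, I would fall back on the empirical counterpart. That would combine the shuffling experiment (\Cref{sec:shuffling}), where shuffled words have identical abelian statistics yet different $\SR$ in 84\% of pairs, with the matched-rate spread of \Cref{sec:discriminator}. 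The rigorous content, however, rests on the explicit pair from Steps 1–2.
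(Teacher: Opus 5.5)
Your Steps 1--2 are correct, and they reach the conclusion by a different witness than the paper uses.

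\textbf{The paper's witness.} It works in $B_3$ with $2\times 2$ Burau matrices $A=\rho(\sigma_1)$ and $C=\rho(\sigma_2^{-1})$. It compares $w_1=\sigma_1^3\sigma_2^{-2}$ with $w_2=(\sigma_1\sigma_2^{-1})^2\sigma_1$. These have the same abelian image and closed-form spectral radii $4+\sqrt{15}$ versus $5+2\sqrt6$, so no asymptotics are needed. The price is that neither word is a product of injection triplets: the $\sigma_1$-exponent sum $3$ is odd. So the matched statistics belong to words that no permission braid produces.

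\textbf{Your witness.} You stay inside the construction: $(T_0T_1)^m$ versus $T_0^mT_1^m$. This matches even the per-position firing counts, which are finer than the rate. It also gives a qualitative separation, $\SR\equiv 1$ (so $\LE=0$) versus exponential growth. That mirrors the adjacent-position cancellation the paper attributes to the focused regime. The cost is the two computations you flag, and both go through.
\begin{itemize}
\item On four strands, the unreduced Burau matrix at $t=-1$ of $T_0T_1=\sigma_1^2\sigma_2\sigma_3^{-1}$ has characteristic polynomial $(\lambda-1)^2(\lambda^2-\lambda+1)$, so $\SR(T_0T_1)=1$.
\item The $3\times3$ block of $T_0$ has right and left $(2+\sqrt3)$-eigenvectors $u=(1+\sqrt3,\sqrt3,-1)$ and $v=(2,\sqrt3-1,-1-\sqrt3)$; for $T_1$ these are shifted by one strand. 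The normalised overlap $(v_0^\top u_1)(v_1^\top u_0)/\bigl((v_0^\top u_0)(v_1^\top u_1)\bigr)$ equals $-1/12$. Hence $\SR(T_0^mT_1^m)\sim(2+\sqrt3)^{2m}/12$.
\end{itemize}

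\textbf{Step 3 is not a proof.} Positive probability of admissible move sequences settles nothing until you exhibit an SCC, a WAR assignment and trajectories in which exactly the rank pairs $(1,2)$ and $(2,3)$ fire, in the two prescribed orders, with no other pair firing. There is also a framing problem. Under the almost-sure reading of $\LE$ as a deterministic function of $(C,W)$ (the paper's Kingman paragraph), two walks on one deployment almost surely share the same $\LE$. So ``fix one deployment'' is the wrong reduction there; you need two distinct generator processes. Your periodic alternating and blocked schedules already are two such processes, exactly as the paper's ``periodic walks repeating $w_1$ and $w_2$'' are. The paper never realises its witness as ratchet walks either. Your argument therefore already meets the paper's standard without Step 3, and the empirical fallback adds nothing to the proof and can be dropped.
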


\begin{proof}
The Lyapunov exponent $\LE = \lim_{T\to\infty}\frac{1}{T}
\mathbb{E}[\log\|\rho(g_T)\cdots\rho(g_1)\|]$ is a property of the
\emph{joint} distribution of the random matrix sequence, not of its
one-dimensional marginals (the generator frequencies).  Any abelian
statistic is a function of those marginals alone.  Since the Burau
matrices for adjacent generators do not commute, rearranging the
\emph{order} of generators, while preserving their multiset, changes
the spectral growth rate.

\emph{Explicit witness.}  In $B_3$ at the specialisation $t=-1$, set
$A = \rho(\sigma_1) = \bigl(\begin{smallmatrix}1&1\\0&1\end{smallmatrix}\bigr)$
and
$C = \rho(\sigma_2^{-1}) = \bigl(\begin{smallmatrix}1&0\\1&1\end{smallmatrix}\bigr)$.
Both are integer upper/lower-triangular matrices; their product is not the identity
($AC \neq CA$).  Consider the two words of length~5:
\begin{align*}
  w_1 &= \sigma_1^3\,\sigma_2^{-2}, &
  \rho(w_1) &= A^3C^2 = \begin{pmatrix}7&3\\2&1\end{pmatrix},
  \quad \SR(\rho(w_1)) = 4+\sqrt{15};\\
  w_2 &= (\sigma_1\sigma_2^{-1})^2\sigma_1, &
  \rho(w_2) &= (AC)^2A = \begin{pmatrix}5&8\\3&5\end{pmatrix},
  \quad \SR(\rho(w_2)) = 5+2\sqrt{6}.
\end{align*}
Both words have identical abelian image $(3,-2)\in\mathbb{Z}^2$
(three $\sigma_1$'s and two $\sigma_2^{-1}$'s), so every abelian
statistic (gate-firing rate, $\DWS$, exponent sums) agrees.  Yet
$4+\sqrt{15}\approx 7.87 \neq 9.90\approx 5+2\sqrt{6}$, so the
periodic walks repeating $w_1$ and $w_2$ have Lyapunov exponents
$\tfrac{1}{5}\log(4+\sqrt{15})\approx 0.413$ and
$\tfrac{1}{5}\log(5+2\sqrt{6})\approx 0.459$ per step, respectively.
No abelian statistic can distinguish them.
\end{proof}

\begin{corollary}[Telescoping blindness for directed cycles]
\label{cor:telescoping}
Let $v_1 \to v_2 \to \cdots \to v_k \to v_1$ be a directed cycle embedded in a
ratchet SCC $(C, W)$.  The contribution of this cycle to $\DWS$ is identically
zero for \emph{every} WAR assignment $W$:
\[
  \DWS_{\mathrm{cycle}}(W)
  \;=\; \sum_{i=1}^{k}\bigl(W(v_{i+1\bmod k}) - W(v_i)\bigr)
  \;=\; 0.
\]
No WAR assignment can make a directed cycle net-ascending
($\DWS_{\mathrm{cycle}} > 0$) or net-descending
($\DWS_{\mathrm{cycle}} < 0$): the cycle is $\DWS$-neutral by
construction.

Although the cycle's net privilege flow is always zero, individual edges within
it can be ascending or descending depending on the WAR assignment.  Whenever
at least one cycle edge is ascending, the gate fires on that edge, injecting
non-commuting braid generators that drive spectral growth.  A directed cycle
with $r$ ascending edges fires $r$ gate events per traversal while
contributing exactly zero to $\DWS$, so the abelian pre-filter is
permanently blind to the cycle's escalation potential regardless of WAR.
\end{corollary}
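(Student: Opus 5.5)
The identity itself is a telescoping sum, so I will start there. Write the cycle's edge set as $\{(v_i \to v_{i+1 \bmod k}) : i = 1,\dots,k\}$, and note that the restriction of $\DWS$ to these edges is $\sum_{i=1}^k \delta(v_i \to v_{i+1 \bmod k})$ by the definition in \Cref{rem:dws} together with \Cref{def:war-flow}. I then split the sum as $\sum_i W(v_{i+1 \bmod k}) - \sum_i W(v_i)$. The map $i \mapsto i+1 \bmod k$ is a bijection of $\{1,\dots,k\}$, so both sums equal $\sum_{j} W(v_j)$ and cancel. No property of $W$ is used, and the result therefore holds for every WAR assignment. The ``neither net-ascending nor net-descending'' claim is just a restatement of this.

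There is one subtlety. The paper defines $\DWS$ only over $E_{\mathrm{dir}}$, so I must check that every cycle edge really is directed; if it were not, the edge would be bidirectional and excluded from the sum. To handle this, I will read ``directed cycle'' as a cycle whose edges all lie in $E_{\mathrm{dir}}$, and say so explicitly. Under that reading, the cycle's contribution to $\DWS$ is exactly the sum above.

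The second paragraph of the statement needs three steps, in this order. First, I exhibit ascending edges despite zero net flow. The summands sum to zero, so if $W$ is not constant on the cycle, some summand is strictly positive and some is strictly negative. Hence at least one edge is ascending whenever $W$ is non-constant on the cycle. Second, I link an ascending directed edge to gate events via \Cref{def:gate}. Condition (1) holds because the edge is directed, and condition (2) holds for that walker because the edge is ascending. Third, I invoke \Cref{prop:sr-injection} to conclude that each firing multiplies the accumulated product by a matrix with dominant eigenvalue $2+\sqrt{3}>1$.

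The main obstacle is the sentence ``fires $r$ gate events per traversal''. As literally stated, it is false: by \Cref{def:gate}, a firing requires \emph{two} adjacent-rank walkers to be simultaneously on ascending directed edges. One walker traversing the cycle therefore does not by itself fire. My first move is to weaken the claim to a heuristic or expected-value statement. Concretely, each ascending cycle edge is an edge on which a gate \emph{can} fire whenever a rank-adjacent walker is simultaneously on another ascending directed edge. By irreducibility (\Cref{sec:markov}) and stationarity, each of the $r$ ascending edges is used with positive stationary frequency, so the expected firing rate attributable to the cycle is positive and grows with $r$.

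Finally, I close the blindness conclusion. This positive firing activity is invariant under any change of $W$ that preserves which cycle edges ascend, while the cycle's $\DWS$ contribution stays at $0$. So the cycle's escalation potential cannot be read from $\DWS$, which is consistent with \Cref{prop:abelian-blindness}.
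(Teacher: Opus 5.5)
Your proof of the displayed identity matches the paper's. The paper telescopes the sum term by term and you reindex by the cyclic shift; these are the same observation. The paper's proof stops there and gives no argument for the second paragraph, treating it as commentary.

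Your caveats about that second paragraph are correct. First, the identity equals the cycle's contribution to $\DWS$ only when every cycle edge lies in $E_{\mathrm{dir}}$. In SCC~233, the cycle $v_0\to v_1\to v_2\to v_4\to v_0$ runs through two bidirectional edges and contributes $W(v_1)-W(v_4)$. Second, under \Cref{def:gate} a lone walker on an ascending edge never fires the gate, so ``$r$ gate events per traversal'' does not follow from the definitions. Restricting the blindness conclusion to $\DWS$ is also the right reading: the gate-firing count is itself abelian and does register firings on cycle edges.

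Your heuristic replacement for the second paragraph overreaches in three places.

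\emph{Spectral growth.} \Cref{prop:sr-injection} concerns the spectral radius of a single injected word, not of the accumulated product, so it cannot deliver ``drive spectral growth''. The paper's own computation in \Cref{sec:triplet-algebra} gives $\SR(T_iT_{i+1})=1$: two firings at adjacent positions cancel completely. That clause needs the same weakening you applied to the counting clause.

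\emph{Positive firing rate.} Positive stationary frequency of an ascending edge does not by itself give a positive firing rate, because a firing is a joint event of two rank-adjacent walkers. You need something like the hub-landing step in the proof of \Cref{thm:firing-rate}. Take two walkers with consecutive NHI identifiers at the tail $u$ of an ascending directed edge. Both take it with probability $\pi(u)^2/d_u^2>0$ in stationarity, by independence. They then tie in WAR and are therefore rank-adjacent. You must also show that, with positive probability, they do not occupy the top two ranks, where no injection position exists.

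\emph{Invariance.} The firing rate is not invariant under changes of $W$ that preserve which cycle edges ascend. The rank order, and whether the partner walker's edge ascends, depend on all of $W$. Only positivity of the rate survives such changes, and ``grows with $r$'' is unsupported.
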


\begin{proof}
The sum telescopes: $(W(v_2)-W(v_1)) + (W(v_3)-W(v_2)) + \cdots +
(W(v_1)-W(v_k)) = W(v_1) - W(v_1) = 0$.  Since the WAR values appear
and cancel in adjacent pairs around the cycle, the result holds for
every choice of $W$.
\end{proof}

\label{rem:abelian-blindness}
Abelian statistics ($\DWS$, gate-firing rate) are blind to the
\emph{order} in which directed edges are traversed and therefore cannot
detect the non-commutative cancellations that define the
focused/dispersed boundary.  Two deployments with identical $\DWS$ and
firing rate can have different $\LE$ values: one focused (cancellations
dominate), one dispersed (crossings accumulate), with opposite
remediations.

\Cref{cor:telescoping} gives the sharpest structural instance of this
blindness.  A directed cycle contributes exactly zero to $\DWS$ for
\emph{every} WAR assignment simultaneously; it is not that a specific
unlucky assignment happens to cancel: the cancellation is
topological and unavoidable.  Any SCC whose directed edges include a
cycle will therefore have a $\DWS$-invisible escalation channel that
the permission braid can detect but the abelian pre-filter never will.

\paragraph{Regime as a deployment property.}
The two regimes classify a \emph{deployment}, not a topology alone.
Two SCCs with identical primorial invariant $R$ (same fusion class)
may sit in different regimes if their WAR assignments differ.
Conversely, the same SCC observed at different epochs may transition
between regimes as its WAR assignment drifts.  This is precisely the
operational value of the braid framework: topology screening (primorial)
identifies ratchets, but deployment screening (Burau LE) identifies
which ratchets are actively dangerous and which kind of intervention
is required.

\section{Empirical Validation}
\label{sec:empirical}

\subsection{Dataset}

The validation dataset~\cite{kaggle} covers 1,000 ratchet SCC topologies
generated at random from the set of all 6-vertex SCC topologies,
each a directed--mixed graph combining directed and bidirectional edges.
For each topology, 50 WAR assignments $(W_0,\ldots,W_5)\in\{0,\ldots,5\}^6$
are sampled uniformly at random and evaluated with the braid walk
simulator at $T = 50$ epochs and $w = 64$ walks, yielding \textbf{49,972}
(SCC, WAR) pairs in total.

Two independent regime labels are computed per pair:
\begin{itemize}
  \item \textbf{Burau LE}: dispersed if $\LE > 0.5847$, focused otherwise.
        The threshold is the median of the per-SCC mean $\LE$ across the
        full 1,000-SCC corpus.
  \item \textbf{Abelian rate}: dispersed if $\texttt{le\_counting} > 0.6586$,
        focused otherwise, with the threshold set identically as the
        corpus median of per-SCC mean rates.
\end{itemize}

Across all 49,972 pairs, \textbf{2,852 (5.7\%)} exhibit a regime
disagreement between the two metrics.  The two error directions are
nearly symmetric: 1,418 \emph{FD} cases (Focused by Burau, Dispersed
by rate; Burau says the deployment is manageable, rate raises a false
alarm) and 1,434 \emph{DF} cases (Dispersed by Burau, Focused by
rate; Burau detects genuine non-commutative growth that the abelian
rate misses).  This near-symmetry at the corpus level conceals
a sharply topology-dependent pattern: individual SCCs are strongly
skewed toward one error type, and the skew direction is governed by
whether the topology's directed edges concentrate crossings on a
single braid strand position or fan out to produce diverse non-commuting
generators.

At $T = 50$ epochs, 1,412 rows (2.83\%) encounter integer arithmetic
overflow in the Burau matrix product; these rows record
$\LE = \texttt{le\_counting} = 0$ and therefore agree trivially on the
focused label, contributing zero disagreements.  No SCC has all 50
WAR samples overflowed, and all four case-study SCCs below are
overflow-free.

\subsection{Case studies}
The four subsections that follow use ratchet topologies drawn directly
from the synthetic corpus. They are ordered by failure mode:
\textbf{SCC~216} (9 dir, 2 bidir) shows pure FD failures: directed edges
concentrate ascending crossings on a small set of strand positions, and
the abelian rate consistently over-calls dispersed.
\textbf{SCC~207} (10 dir, 4 bidir) shows pure DF failures: a bidir hub
at $v_1$ distributes crossing events across diverse strand positions,
driving Burau growth that the abelian rate misses.
\textbf{SCC~126} (9 dir, 3 bidir) sits at the transition: depending on WAR
assignment, the topology produces either FD or DF failures with equal frequency.
\textbf{SCC~233} (6 dir, 5 bidir) shows pure DF failures, driven by a directed 4-cycle
whose WAR contributions telescope to zero while its braid generators
grow exponentially.

\subsection{SCC~216: Directed Fan-Out Hub, FD-Only}
\label{sec:scc216}

SCC~216 has 9 directed and 2 bidirectional edges.
The directed edges are
$v_1{\to}v_0$, $v_0{\to}v_2$, $v_0{\to}v_3$, $v_1{\to}v_2$,
$v_3{\to}v_1$, $v_5{\to}v_1$, $v_3{\to}v_4$, $v_3{\to}v_5$, $v_4{\to}v_5$;
the bidirectional edges are $v_2{\leftrightarrow}v_4$ and $v_2{\leftrightarrow}v_5$.
The dominant structural feature is $v_3$ as a directed fan-out hub:
it sends edges to $v_1$, $v_4$, and $v_5$, while itself receiving only
from $v_0{\to}v_3$.  Ascending crossings from $v_3$ repeatedly land on
$v_3$'s strand, and the two bidir edges ($v_2{\leftrightarrow}v_4$,
$v_2{\leftrightarrow}v_5$) oscillate locally without diversifying generator
strand positions.

\begin{figure}[h]
\centering
\begin{tikzpicture}[->,>=Stealth,shorten >=1pt,auto,semithick]
  \node[state,fill=lightgray] (v0) at (0,   3)   {$v_0$};
  \node[state,fill=lightgray] (v1) at (2.6, 1.5) {$v_1$};
  \node[state,fill=lightgray] (v2) at (2.6,-1.5) {$v_2$};
  \node[state,fill=lightgray] (v3) at (0,  -3)   {$v_3$};
  \node[state,fill=lightgray] (v4) at (-2.6,-1.5){$v_4$};
  \node[state,fill=lightgray] (v5) at (-2.6, 1.5){$v_5$};
  \path
    %% directed edges
    (v1) edge (v0)
    (v0) edge (v2)
    (v0) edge (v3)
    (v1) edge (v2)
    (v3) edge (v1)
    (v5) edge (v1)
    (v3) edge (v4)
    (v3) edge (v5)
    (v4) edge (v5)
    %% bidirectional edges
    (v2) edge [bend left=10,opacity=0.4] (v4)
    (v4) edge [bend left=10,opacity=0.4] (v2)
    (v2) edge [bend left=10,opacity=0.4] (v5)
    (v5) edge [bend left=10,opacity=0.4] (v2)
  ;
\end{tikzpicture}
\caption{SCC~216: 9 directed (solid) and 2 bidirectional edges (paired, 40\% opacity).
  $v_3$ is the directed fan-out hub; $v_2{\leftrightarrow}v_4$ and
  $v_2{\leftrightarrow}v_5$ are local bidir oscillators.}
\label{fig:scc216}
\end{figure}
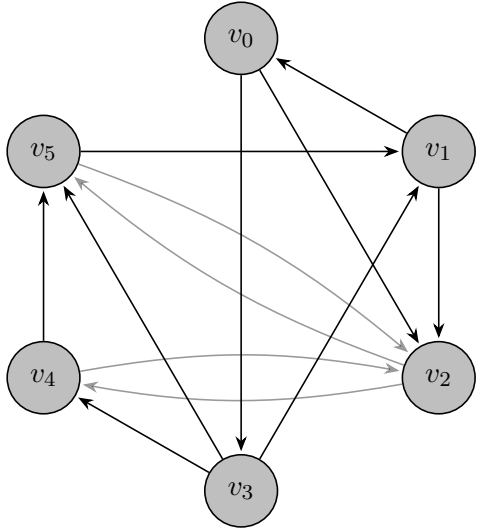

Out of 50 WAR samples, \textbf{6 are FD} and \textbf{0 are DF}
(thresholds $\theta_{\LE}=0.5847$, $\theta_{\mathrm{rate}}=0.6586$).
The abelian rate consistently over-calls dispersed; Burau never over-calls.

\begin{table}[h]
\centering
\begin{tabular}{lrrl}
\toprule
WAR $(W_0,\ldots,W_5)$ & $\LE$ & rate & type \\
\midrule
$(0,2,0,0,3,4)$ & 0.460 & 0.675 & FD \\
$(0,3,0,0,4,5)$ & 0.471 & 0.689 & FD \\
$(4,2,1,2,4,2)$ & 0.532 & 0.858 & FD \\
$(5,0,0,5,4,0)$ & 0.551 & 0.793 & FD \\
$(4,0,0,3,5,0)$ & 0.570 & 0.910 & FD \\
$(1,1,0,5,2,5)$ & 0.509 & 0.921 & FD \\
\bottomrule
\end{tabular}
\caption{FD disagreements for SCC~216 (no DF failures).
  FD: $\LE < 0.5847$ (focused by Burau) and rate $> 0.6586$ (dispersed by abelian counter).}
\label{tab:scc216}
\end{table}

\paragraph{Why only FD?}
In every FD case WAR elevates $v_3$ or its targets ($v_4$, $v_5$) to high WAR,
causing ascending crossings from $v_3{\to}v_1$, $v_3{\to}v_4$, $v_3{\to}v_5$
to fire the gate repeatedly.
The abelian counter tallies each gate-fire and reports high rate.
The Burau product, however, accumulates generators at a concentrated set of
strand positions: $v_3$ is the structural origin of nearly every directed
crossing, so all injected braid generators commute or nearly commute.
Repeated applications of generators at the same strand position are not
exponentially amplifying; spectral radius stays subcritical ($\LE < 0.5847$
across all 6 FD cases, reaching 0.570 at most).
DF failure never occurs because the bidir connections of $v_2$ ($v_2{\leftrightarrow}v_4$,
$v_2{\leftrightarrow}v_5$) do not create diverse independent crossing chains:
no WAR assignment can route flow through enough distinct strand positions to
drive exponential non-commutative growth.

\subsection{SCC~207: Bidir Hub Diversifies Generators, DF-Only}
\label{sec:scc207}

SCC~207 has 10 directed and 4 bidirectional edges.
The directed edges are
$v_0{\to}v_2$, $v_3{\to}v_0$, $v_4{\to}v_0$, $v_0{\to}v_5$,
$v_3{\to}v_1$, $v_2{\to}v_3$, $v_4{\to}v_2$, $v_2{\to}v_5$,
$v_3{\to}v_5$, $v_4{\to}v_5$;
the bidirectional edges are
$v_0{\leftrightarrow}v_1$, $v_1{\leftrightarrow}v_2$,
$v_1{\leftrightarrow}v_5$, $v_3{\leftrightarrow}v_4$.
Node $v_1$ is the bidir hub, connected bidirectionally to $v_0$, $v_2$, and
$v_5$; nodes $v_3$ and $v_4$ share a bidir link while each sending multiple
directed edges outward.

\begin{figure}[h]
\centering
\begin{tikzpicture}[->,>=Stealth,shorten >=1pt,auto,semithick]
  \node[state,fill=lightgray] (v0) at (0,   3)   {$v_0$};
  \node[state,fill=lightgray] (v1) at (2.6, 1.5) {$v_1$};
  \node[state,fill=lightgray] (v2) at (2.6,-1.5) {$v_2$};
  \node[state,fill=lightgray] (v3) at (0,  -3)   {$v_3$};
  \node[state,fill=lightgray] (v4) at (-2.6,-1.5){$v_4$};
  \node[state,fill=lightgray] (v5) at (-2.6, 1.5){$v_5$};
  \path
    %% directed edges
    (v0) edge (v2)
    (v3) edge (v0)
    (v4) edge (v0)
    (v0) edge (v5)
    (v3) edge (v1)
    (v2) edge (v3)
    (v4) edge (v2)
    (v2) edge (v5)
    (v3) edge (v5)
    (v4) edge (v5)
    %% bidirectional edges
    (v0) edge [bend left=10,opacity=0.4] (v1)
    (v1) edge [bend left=10,opacity=0.4] (v0)
    (v1) edge [bend left=10,opacity=0.4] (v2)
    (v2) edge [bend left=10,opacity=0.4] (v1)
    (v1) edge [bend left=10,opacity=0.4] (v5)
    (v5) edge [bend left=10,opacity=0.4] (v1)
    (v3) edge [bend left=12,opacity=0.4] (v4)
    (v4) edge [bend left=12,opacity=0.4] (v3)
  ;
\end{tikzpicture}
\caption{SCC~207: 10 directed (solid) and 4 bidirectional edges (paired, 40\% opacity).
  $v_1$ is the bidir hub connected to $v_0$, $v_2$, $v_5$;
  $v_3$ and $v_4$ are directed fan-out sources sharing a bidir link.}
\label{fig:scc207}
\end{figure}
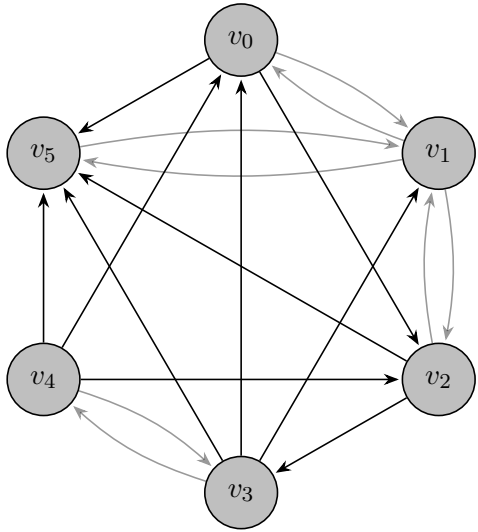

Out of 50 WAR samples, \textbf{0 are FD} and \textbf{5 are DF}.
The abelian rate never over-calls dispersed; Burau consistently detects
non-commutative growth that the rate misses.

\begin{table}[h]
\centering
\begin{tabular}{lrrl}
\toprule
WAR $(W_0,\ldots,W_5)$ & $\LE$ & rate & type \\
\midrule
$(1,3,5,0,5,2)$ & 0.585 & 0.618 & DF \\
$(4,5,1,4,5,3)$ & 0.596 & 0.635 & DF \\
$(0,0,5,5,2,4)$ & 0.618 & 0.638 & DF \\
$(0,0,5,3,2,2)$ & 0.591 & 0.611 & DF \\
$(0,3,3,2,5,2)$ & 0.630 & 0.653 & DF \\
\bottomrule
\end{tabular}
\caption{DF disagreements for SCC~207 (no FD failures).
  DF: $\LE > 0.5847$ (dispersed by Burau) and rate $< 0.6586$ (focused by abelian counter).}
\label{tab:scc207}
\end{table}

\paragraph{Why only DF?}
In each DF case WAR distributes so that $v_3$ or $v_4$ (directed fan-out
sources) are elevated relative to $v_0$, $v_2$, $v_5$.
The directed edges $v_3{\to}v_0$, $v_4{\to}v_0$, $v_3{\to}v_1$,
$v_4{\to}v_2$, $v_3{\to}v_5$, $v_4{\to}v_5$ fan out to five different target
nodes, each occupying a distinct strand position in the braid.
The bidir hub $v_1$ adds further crossing diversity by oscillating between
$v_0$, $v_2$, and $v_5$ at different WAR levels.
The combined effect is a braid word with generators at many independent
positions; non-commuting products accumulate and Burau detects spectral
growth ($\LE > 0.5847$).
The abelian counter, however, sees near-balanced directed flow in these
cases (rates marginally below threshold), classifying as focused while
the non-commutative structure grows exponentially.
FD failure never occurs because the fan-out structure always distributes
crossing events across enough strand positions to prevent generator repetition
from dominating the Burau product.

\subsection{SCC~126: Dual-Source Hub, Balanced FD and DF}
\label{sec:scc126}

SCC~126 has 9 directed and 3 bidirectional edges.
The directed edges are
$v_1{\to}v_0$, $v_0{\to}v_3$, $v_5{\to}v_0$, $v_1{\to}v_2$,
$v_1{\to}v_4$, $v_5{\to}v_1$, $v_3{\to}v_2$, $v_5{\to}v_2$, $v_3{\to}v_5$;
the bidirectional edges are
$v_2{\leftrightarrow}v_4$, $v_3{\leftrightarrow}v_4$, $v_4{\leftrightarrow}v_5$.
The topology has two major directed sources: $v_1$ (sending to $v_0$, $v_2$, $v_4$)
and $v_5$ (sending to $v_0$, $v_1$, $v_2$).  Node $v_4$ is the bidir hub,
connected bidirectionally to $v_2$, $v_3$, and $v_5$.

\begin{figure}[h]
\centering
\begin{tikzpicture}[->,>=Stealth,shorten >=1pt,auto,semithick]
  \node[state,fill=lightgray] (v0) at (0,   3)   {$v_0$};
  \node[state,fill=lightgray] (v1) at (2.6, 1.5) {$v_1$};
  \node[state,fill=lightgray] (v2) at (2.6,-1.5) {$v_2$};
  \node[state,fill=lightgray] (v3) at (0,  -3)   {$v_3$};
  \node[state,fill=lightgray] (v4) at (-2.6,-1.5){$v_4$};
  \node[state,fill=lightgray] (v5) at (-2.6, 1.5){$v_5$};
  \path
    %% directed edges
    (v1) edge (v0)
    (v0) edge (v3)
    (v5) edge (v0)
    (v1) edge (v2)
    (v1) edge (v4)
    (v5) edge (v1)
    (v3) edge (v2)
    (v5) edge (v2)
    (v3) edge (v5)
    %% bidirectional edges
    (v2) edge [bend left=10,opacity=0.4] (v4)
    (v4) edge [bend left=10,opacity=0.4] (v2)
    (v3) edge [bend left=12,opacity=0.4] (v4)
    (v4) edge [bend left=12,opacity=0.4] (v3)
    (v4) edge [bend left=10,opacity=0.4] (v5)
    (v5) edge [bend left=10,opacity=0.4] (v4)
  ;
\end{tikzpicture}
\caption{SCC~126: 9 directed (solid) and 3 bidirectional edges (paired, 40\% opacity).
  $v_1$ and $v_5$ are dual directed sources; $v_4$ is the bidir hub.}
\label{fig:scc126}
\end{figure}
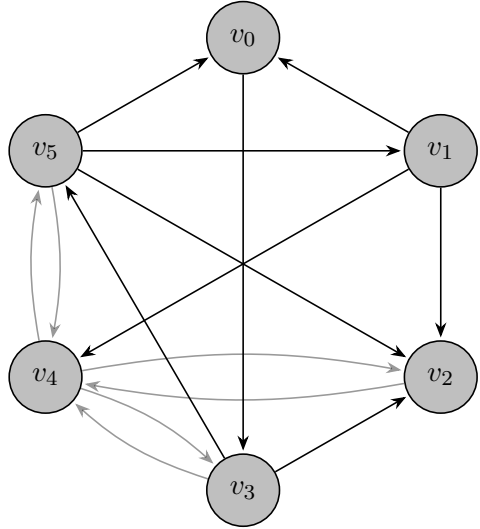

Out of 50 WAR samples, \textbf{3 are FD} and \textbf{3 are DF},
the most balanced split of the four case studies.

\begin{table}[h]
\centering
\begin{tabular}{lrrl}
\toprule
WAR $(W_0,\ldots,W_5)$ & $\LE$ & rate & type \\
\midrule
$(5,1,0,2,0,2)$ & 0.468 & 0.681 & FD \\
$(4,3,2,5,2,3)$ & 0.428 & 0.746 & FD \\
$(3,2,4,3,3,3)$ & 0.546 & 0.785 & FD \\
\midrule
$(1,1,3,5,5,5)$ & 0.605 & 0.648 & DF \\
$(2,3,3,4,0,3)$ & 0.638 & 0.651 & DF \\
$(0,2,1,5,2,5)$ & 0.604 & 0.614 & DF \\
\bottomrule
\end{tabular}
\caption{Disagreements for SCC~126.
  FD: $\LE < 0.5847$ and rate $> 0.6586$; DF: $\LE > 0.5847$ and rate $< 0.6586$.}
\label{tab:scc126}
\end{table}

\paragraph{FD failures: $v_1$ as a crossing funnel.}
In the FD cases (e.g.\ $(4,3,2,5,2,3)$, $\LE{=}0.428$, rate $=0.746$)
WAR elevates $v_3$ to the highest rank while $v_1$ has intermediate WAR.
Directed edges $v_5{\to}v_1$, $v_3{\to}v_2$, and $v_3{\to}v_5$ produce
ascending crossings that land repeatedly near $v_3$'s strand position.
The abelian counter accumulates many ascending events and reports high rate.
The Burau product, however, sees generators concentrated at the strand
corresponding to $v_3$; repeated identical generators do not drive spectral
growth ($\LE{=}0.428$).

\paragraph{DF failures: $v_5$ and $v_4$ activate diverse chains.}
In the DF cases (e.g.\ $(1,1,3,5,5,5)$, $\LE{=}0.605$, rate $=0.648$)
WAR elevates both $v_3$, $v_4$, and $v_5$ to high WAR.
Now $v_5{\to}v_0$, $v_5{\to}v_1$, $v_5{\to}v_2$, and the bidir
$v_4{\leftrightarrow}v_5$ crossings span five distinct strand positions.
The diverse generators produce non-commuting products; Burau detects
spectral growth ($\LE{=}0.605$) while the abelian rate, seeing near-balanced
directed flow, falls just below the dispersed threshold.

\paragraph{Hub as regime switch.}
The balanced FD/DF split arises from the interplay of $v_1$ and $v_5$: when
WAR concentrates flow through $v_1$, it becomes a crossing funnel~$\Rightarrow$~FD;
when WAR distributes to activate $v_5$'s fan-out plus $v_4$'s bidir connections,
generator diversity drives exponential Burau growth~$\Rightarrow$~DF.
The abelian rate is blind to this distinction in both directions.

\subsection{SCC~233: Directed 4-Cycle Within Bidir-Heavy Topology, DF-Only}
\label{sec:scc233}

SCC~233 has 6 directed and 5 bidirectional edges: it exhibits
\textbf{0 FD} and \textbf{7 DF} failures out of 50 WAR samples.

The directed edges are $v_0{\to}v_1$, $v_4{\to}v_0$, $v_5{\to}v_0$,
$v_3{\to}v_1$, $v_1{\to}v_5$, $v_5{\to}v_4$;
the bidirectional edges are
$v_0{\leftrightarrow}v_3$, $v_1{\leftrightarrow}v_2$, $v_1{\leftrightarrow}v_4$,
$v_2{\leftrightarrow}v_4$, $v_3{\leftrightarrow}v_5$.
The defining structural feature is the directed 4-cycle
$v_0{\to}v_1{\to}v_5{\to}v_4{\to}v_0$, embedded within a bidir-rich neighbourhood.

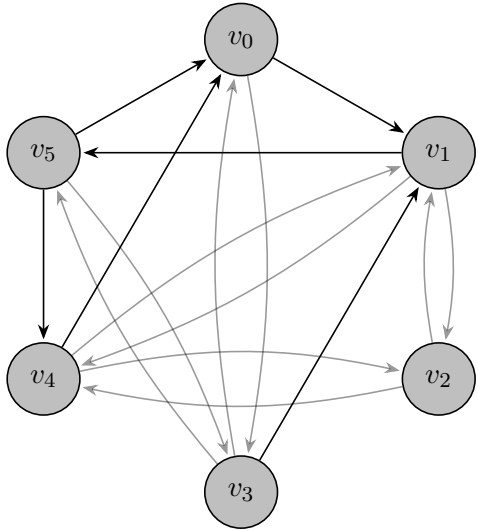
\begin{figure}[h]
\centering
\begin{tikzpicture}[->,>=Stealth,shorten >=1pt,auto,semithick]
  \node[state,fill=lightgray] (v0) at (0,   3)   {$v_0$};
  \node[state,fill=lightgray] (v1) at (2.6, 1.5) {$v_1$};
  \node[state,fill=lightgray] (v2) at (2.6,-1.5) {$v_2$};
  \node[state,fill=lightgray] (v3) at (0,  -3)   {$v_3$};
  \node[state,fill=lightgray] (v4) at (-2.6,-1.5){$v_4$};
  \node[state,fill=lightgray] (v5) at (-2.6, 1.5){$v_5$};
  \path
    %% directed edges (4-cycle: v0->v1->v5->v4->v0)
    (v0) edge (v1)
    (v1) edge (v5)
    (v5) edge (v4)
    (v4) edge (v0)
    %% remaining directed edges
    (v5) edge (v0)
    (v3) edge (v1)
    %% bidirectional edges
    (v0) edge [bend left=10,opacity=0.4] (v3)
    (v3) edge [bend left=10,opacity=0.4] (v0)
    (v1) edge [bend left=10,opacity=0.4] (v2)
    (v2) edge [bend left=10,opacity=0.4] (v1)
    (v1) edge [bend left=10,opacity=0.4] (v4)
    (v4) edge [bend left=10,opacity=0.4] (v1)
    (v2) edge [bend left=12,opacity=0.4] (v4)
    (v4) edge [bend left=12,opacity=0.4] (v2)
    (v3) edge [bend left=10,opacity=0.4] (v5)
    (v5) edge [bend left=10,opacity=0.4] (v3)
  ;
\end{tikzpicture}
\caption{SCC~233: 6 directed (solid) and 5 bidirectional edges (paired, 40\% opacity).
  Directed 4-cycle $v_0{\to}v_1{\to}v_5{\to}v_4{\to}v_0$ is the source of DF failures;
  bidir edges $v_1{\leftrightarrow}v_2$, $v_1{\leftrightarrow}v_4$, $v_2{\leftrightarrow}v_4$
  form a local triangle.}
\label{fig:scc233}
\end{figure}

\begin{table}[h]
\centering
\begin{tabular}{lrrl}
\toprule
WAR $(W_0,\ldots,W_5)$ & $\LE$ & rate & type \\
\midrule
$(1,4,5,2,3,3)$ & 0.587 & 0.600 & DF \\
$(2,3,1,3,5,1)$ & 0.601 & 0.625 & DF \\
$(0,4,2,1,4,2)$ & 0.606 & 0.635 & DF \\
$(4,4,4,4,3,1)$ & 0.627 & 0.657 & DF \\
$(2,3,5,5,2,1)$ & 0.603 & 0.656 & DF \\
\bottomrule
\end{tabular}
\caption{DF disagreements for SCC~233 (no FD failures; 2 of 7 rows omitted for brevity).
  DF: $\LE > 0.5847$ (dispersed by Burau) and rate $< 0.6586$ (focused by abelian counter).}
\label{tab:scc233}
\end{table}

\paragraph{Why only DF?}
\Cref{cor:telescoping} provides the structural explanation.  The four
directed cycle edges $v_0{\to}v_1{\to}v_5{\to}v_4{\to}v_0$ form a
closed loop, so their WAR differences telescope to exactly zero:
\[
  \DWS_{\mathrm{cycle}}
  = \bigl(W(v_1){-}W(v_0)\bigr)
  + \bigl(W(v_5){-}W(v_1)\bigr)
  + \bigl(W(v_4){-}W(v_5)\bigr)
  + \bigl(W(v_0){-}W(v_4)\bigr)
  = 0
\]
for \emph{every} WAR assignment.  This is a topological identity that holds
unconditionally, not a coincidental numerical cancellation.
The abelian rate therefore sees the cycle's contribution as permanently neutral;
it can only report dispersed based on the two non-cycle directed edges
($v_5{\to}v_0$ and $v_3{\to}v_1$), which in the DF cases nearly cancel
or fall below threshold.

Yet the cycle still fires the gate on its ascending edges.  In
$(1,4,5,2,3,3)$: $v_0{=}1$, $v_1{=}4$, so $v_0{\to}v_1$ is ascending;
$v_1{=}4$, $v_5{=}3$, so $v_1{\to}v_5$ is descending; $v_5{=}3$, $v_4{=}3$,
so $v_5{\to}v_4$ is flat; $v_4{=}3$, $v_0{=}1$, so $v_4{\to}v_0$ is
descending.  The ascending edge $v_0{\to}v_1$ fires the gate, injecting
braid generators.  As walkers traverse the cycle, the four structurally
distinct generator positions produce non-commuting products that accumulate
exponentially ($\LE{=}0.587$).
The bidir triangle $v_1{\leftrightarrow}v_2{\leftrightarrow}v_4{\leftrightarrow}v_1$
adds further generator diversity at $v_1$'s strand without inflating the
abelian rate.

FD failure never occurs because the telescoping identity prevents the cycle
from contributing to the abelian rate: any WAR assignment that fires the
cycle must produce a rate contribution near zero from cycle activity alone,
so the abelian counter cannot over-call dispersed based on cycle crossings.
The bidir-heavy neighbourhood could in principle produce FD by inflating the
abelian rate, but bidir edges' paired forward--reverse crossings cancel
in $\DWS$ as well, providing no net directed-flow signal to over-count.

%% ============================================================
\section{Discussion}
\label{sec:discussion}

\subsection{Topology classification vs.\ deployment classification}

The primorial invariant classifies \emph{topologies}: it discriminates
oscillators (no ratchet structure, topologically safe) from ratchets
(directed WAR paths capable of monotone escalation).  Two SCCs in the
same primorial fusion class share identical cycle escalation profiles
and receive one analyst review, the right level of abstraction for
triage and whitelist management at scale.

The braid framework classifies \emph{deployments}.  It is applied
\emph{after} primorial screening, for SCCs already identified as
ratchets.  Its input is the topology together with the ground-truth
WAR assignment; its output is $\LE$, the non-abelian risk metric that
abelian statistics cannot determine.  $\LE$ alone partitions every
ratchet deployment into focused or dispersed, the two regimes with
distinct remediations.  The two layers operate sequentially
(see \Cref{sec:operability} for the full pipeline).
This separation avoids redundant dynamic analysis: oscillators are
filtered before any dynamic analysis runs.

\subsection{Alternative Abelian baselines}
\label{sec:abelian-baselines}
The Abelian baseline used throughout this paper, gate-firing count
and $|\DWS|$, is intentionally simple: it represents the strongest
signal available from a single linear scan of the directed edges under
a given WAR assignment.  A natural question is whether more
sophisticated Abelian statistics could close the gap with temporal
braid $\LE$.  We examine three candidates and show that none can
substitute for the non-Abelian screen, for reasons that are structural
(\Cref{prop:abelian-blindness}), empirical, or both.

\paragraph{Spectral gap of the transition matrix.}
The spectral gap $\gamma = 1 - \lambda_2(P)$ of the random-walk
transition matrix $P$ measures mixing speed: a large gap means fast
mixing, so walkers move broadly across the SCC and diversify generator
positions; a small gap means slow mixing, concentrating the walk in
local neighbourhoods and repeating similar generators.  This creates a
correlation pathway with the focused/dispersed distinction (fast-mixing
SCCs lean dispersed, slow-mixing lean focused), but the correlation is
indirect and operates through the same gate-firing mechanism already
captured by the firing count.

More importantly, $\gamma$ is a purely topological quantity computed
from $P$ alone, independent of the WAR assignment $W$.  It therefore
cannot detect the WAR-dependent component of risk that $\DWS$
captures, and combining $\gamma$ with $\DWS$ yields a two-variable
Abelian composite more complex than the current baseline without
guaranteed improvement.  Fundamentally, $\gamma$ is an eigenvalue of a
commutative matrix operation and is therefore an Abelian statistic in
the sense of \Cref{prop:abelian-blindness}: it provably cannot
determine $\LE$.

\paragraph{Cycle-count heuristics.}
It is tempting to count directed cycles~\cite{johnson1975} (or simple directed cycles of bounded length)
in the SCC as a topological structure proxy: more
cycles imply more opportunities for generator diversity and higher
$\LE$.  However, cycle counts are purely topological and entirely blind
to the WAR assignment.  They are therefore \emph{strictly more naive}
than $|\DWS|$, which integrates both topology (which edges are
directed) and scalar (WAR values along those edges).  Exact cycle
enumeration is \#P-hard in general; approximations add computational
cost while providing less information than the Abelian baseline already
in use.  We do not consider cycle-count heuristics a competitive
baseline.

\paragraph{WAR-weighted directed centrality.}
A more solid abelian alternative targets the
\emph{crossing-concentration} mechanism directly.  For each node $v$,
define its \emph{ascending in-load}
\[
  \ell(v, W)
  \;=\;
  \sum_{\substack{(u \to v)\,\in\, E_{\mathrm{dir}} \\[2pt] W(v)\,>\,W(u)}}
  \bigl(W(v) - W(u)\bigr),
\]
so that $\DWS(C,W) = \sum_{v} \ell(v,W)$.  The \emph{WAR-weighted
centrality index} $H$ measures how unevenly that load is distributed:
\[
  H(C, W)
  \;=\;
  \mathrm{Gini}\!\left(\bigl\{\ell(v, W)\bigr\}_{v \in V}\right)
  \;=\;
  \frac{
    \displaystyle\sum_{v,\,u \in V}
    \bigl|\ell(v,W) - \ell(u,W)\bigr|
  }{
    2\,|V|\,\DWS(C,W)
  },
\]
with $H=0$ for uniform load and $H=1$ for a single-node hub.
High $H$ predicts focused behaviour; $H$ is undefined when $\DWS \le 0$.
 
We tested $H$ on all FD and DF disagreement samples from
SCCs~216 and~207, classifying as focused when $H > 0.5$:

\begin{table}[h]
\begin{center}
\begin{tabular}{lcc}
\toprule
Error type & rate correct & $H$ correct \\
\midrule
FD: Burau focused, rate false alarm (6 cases, SCC~216) & 0/6 & \textbf{4/6} \\
DF: Burau dispersed, rate misses it (5 cases, SCC~207) & 0/5  & 1/5 \\
\midrule
Total & 0/11 & 5/11 \\
\bottomrule
\end{tabular}
\caption{The centrality index: a local, specialized discriminator}
\end{center}
\end{table}

$H$ improves on the abelian rate for FD cases in SCC~216: ascending
load concentrates on $v_3$'s strand, and the cases where $\DWS > 0$
(cases with $\DWS \in \{4,5\}$) are correctly classified as focused
by $H > 0.5$.  The four FD cases with $\DWS < 0$ have undefined $H$
and fall back to the default focused label.

However, this success is \emph{structurally narrow}: $H$ is valid only when (i)~$\DWS > 0$ (so
that $\ell$ is well-defined and positive) and (ii)~the topology is
hub-shaped, routing ascending flow through a single dominant node.
SCC~216 satisfies both conditions in the DWS$>$0 subset; the
broader 50\,000-sample corpus shows that DWS~$\le~0$ in a majority
of FD cases, making $H$ inapplicable for most of them.  

On DF cases $H$ fails almost entirely (1/5):
dispersed behaviour arises from generator diversity along the
\emph{paths feeding} the hub, a path-level ordering property that any
node-aggregate discards by construction.

\subsection{SCCs with the same primorial R}

As discussed in~\cite{primorial2026}, two SCCs in the same fusion class
share the same R vector: same cycle count, same ordinal WAR
relationships along each cycle edge.  Within such a class, $\LE$ can
still differ between instances: instances sharing the same ordinal
structure may have different cardinal WAR magnitudes, placing them in
different regimes.  This is the precise sense in which the braid
framework adds information that the primorial invariant cannot provide.

\subsection{Limitations}

\begin{itemize}
\item \textbf{Abelian rate is a one-sided proxy with asymmetric errors.}
      The gate-firing rate cannot detect non-commutative cancellations
      and is an \emph{upper bound} on $\LE$ (cancellation factor
      $c \le 1$ across all 49\,972 rows).  This creates a qualitative
      asymmetry between the two error types: FD errors (rate false
      alarms, $n = 1\,418$) carry mean $c = 0.725$ and large
      confidence gaps, meaning the rate would confidently prescribe
      unnecessary topology surgery on a WAR-remediable deployment.
      DF errors (rate misses, $n = 1\,434$) are always marginal
      ($c = 0.957$) and operationally lower-cost.  \textbf{When $\LE$
      and the rate disagree with conviction, $\LE$ is always the
      correct call.}  Temporal braid $\LE$ retains partial correlation
      $r = 0.175$ with deployment structure ($p < 10^{-3}$), versus
      $r = 0.001$ ($p = 0.98$) for counting LE.

\item \textbf{The non-Abelian advantage is primarily a false-alarm
      suppressor.}
      The shuffling experiment (\Cref{sec:shuffling}) shows a median
      $\log\SR$ gap of only $1.3\%$; the non-commutative correction
      explains $3\%$ of residual variance over the full corpus.  The
      benefit concentrates on the FD side: cancellation drops to
      $c \approx 0.73$ in hub-concentrated topologies, where the rate
      would confidently misfire, versus $c \approx 0.96$ on the DF
      side.  

\item \textbf{Synthetic validation sample.}
      The 1,000-SCC corpus is not extracted from production deployments. Burau
      $\LE$ as a focused/dispersed discriminator should be treated as
      a working hypothesis, not a validated security primitive.

\item \textbf{$B_5$ Burau representation is not faithful.}
      For $n \ge 5$~\cite{bigelow1999}, distinct braids may share a
      Burau matrix, making $\LE$ a lower bound on true braid
      complexity.  The operational impact is limited: kernel elements
      are long structured words unlikely to arise from short
      randomized injections, non-faithfulness can only suppress
      spectral growth (downward bias only), and the firing-rate and
      induced-LE bounds bypass the unfaithful kernel entirely.
\end{itemize}

\newpage
%% ============================================================
\section{Operability}
\label{sec:operability}  
  
The braid framework operates on SCCs already identified as  
\emph{ratchets} by the primorial invariant; topological oscillators  
are filtered out before any dynamic analysis runs.  Within ratchets,  
the two regimes are not merely classification labels: each implies a  
distinct mental model and a concrete intervention target.  The  
critical operational boundary is between \textbf{focused} and  
\textbf{dispersed}: both require action but the action type is  
opposite.  
 
\paragraph{} 
Burau LE defines this boundary.

\subsection{Regime-specific mental models}

\paragraph{Focused: the channelled ratchet.}
Escalation paths exist but traffic is concentrated through few
directed edges.  Non-commutative cancellations in the Burau product
moderate spectral growth, keeping $\LE$ below the dispersed threshold.
The WAR assignment plays a key role: focused ratchets are WAR-sensitive,
meaning that reassigning privileges to break the ascending alignment of
directed edges will reduce $\LE$ and suppress the gate.  The mental
model is a funnel that can be blocked by removing the funnel's gradient.

\paragraph{Dispersed: the irreducible tangle.}
Hub-rich topology provides multiple independent escalation paths.
Crossings accumulate without cancelling in the braid product, driving
sustained high spectral growth.  No WAR reassignment can suppress the
gate because the hub topology sustains concurrent ascending traversals
regardless of privilege alignment.  The risk is in the graph structure
itself.  The mental model is an irreducible tangle: its complexity
persists under any labelling.

\subsection{Intervention summary}

\begin{table}[h]
\centering
\begin{tabular}{llll}
\toprule
Regime & Mental model & Intervention target & Operator \\
\midrule
\textbf{focused}    & channelled ratchet  & WAR assignment    & reassign WAR \\
\textbf{dispersed}  & irreducible tangle  & edge set          & add/remove edges \\
\bottomrule
\end{tabular}
\caption{Operability guide: mental model and intervention target per
regime.  Both regimes apply to ratchets (primorial-screened).
The focused/dispersed boundary is defined by Burau LE (impossibility
separator, calibration oracle).}
\label{tab:operability}
\end{table}

The intervention cost escalates monotonically: focused ratchets require
an IAM-level privilege reassignment; dispersed ratchets require
architectural-level changes to the permission graph, which may require
review beyond standard IAM operations.

\subsection{Complementarity with the primorial invariant}

The primorial invariant and the braid framework answer different
questions at different levels of the analysis pipeline:

\begin{itemize}
  \item \textbf{Primorial (topology level): oscillators vs.\ ratchets.}
        The primorial invariant encodes the cycle structure of an SCC
        as a vector of rational numbers.  Its output is a binary
        verdict: \emph{oscillator} (topologically safe, no further
        analysis) or \emph{ratchet} (escalation risk, proceed).
        Ratchets with identical escalation profiles are grouped into
        \emph{fusion classes}, compressing the triage space.

  \item \textbf{Braid framework (deployment level): focused vs.\ dispersed.}
        Given a ratchet SCC and its ground-truth WAR assignment, the
        braid framework outputs $\LE$, which determines whether the
        deployment is focused (WAR-remediable) or dispersed
        (topology-remediable).  This determination is provably beyond
        Abelian statistics; only a non-Abelian instrument can supply it.
\end{itemize}

The two layers compose sequentially into a three-tier pipeline:
{\small
\[
  \text{Primorial}
  \xrightarrow{\text{\scriptsize ratchet?}}
  \text{Ratchet SCC}
  \xrightarrow{\substack{\text{\scriptsize rate (fast triage)}\\[-2pt]\text{\scriptsize temporal LE (non-Abelian screen)}}}
  \text{focused / dispersed}
  \to
  \text{Remediation.}
\]
}
Burau LE sits \emph{off the operational path}: it is not run in real-time screening.

Neither layer alone is sufficient.  Primorial cannot assess deployed
severity: the same ratchet topology can span both regimes depending on
WAR assignment.  Braids cannot replace the topological screen: running
braid analysis on every SCC without primorial's oscillator/ratchet
filter would waste computation on structures incapable of escalation.
Together they form a complete, two-layer risk pipeline with Burau LE
as the theoretical anchor.

\section{Future work}

Three directions extend the theoretical and practical reach of the
framework.

\paragraph{Faithful representations.}
The Burau representation at $t = -1$ is unfaithful for $n \geq 5$
(\Cref{sec:discussion}), which limits the validation
to a lower bound on braid complexity.  The Lawrence--Krammer--Bigelow
(LKB) representation~\cite{krammer2002} is faithful
for all~$n$
Replacing the Burau pipeline with LKB would eliminate the kernel
issue entirely, at the cost of working with
$\binom{n}{2} \times \binom{n}{2}$ matrices ($15 \times 15$ at
$n = 6$).
A natural first step is to validate that the LE values
produced by the LKB representation agree with the Burau values for
the 82 ratchet topologies; any discrepancy would quantify the
information lost to unfaithfulness and determine whether the
$\approx 6\times$ matrix-size increase is justified operationally.

\paragraph{Extending the firing-rate lower bound.}
The bound of \Cref{thm:firing-rate} assumes a single high-privilege hub.
Two natural generalisations remain open.
\emph{(a)~Multi-hub topologies:} sum contributions of disjoint hub
substructures separated by at least two rank positions (avoiding
cross-cluster cancellation).
\emph{(b)~Directed-cycle topologies:} \Cref{cor:telescoping} establishes
that directed cycles have $\DWS_{\mathrm{cycle}} = 0$ for all WAR
assignments, making them a class where Abelian screening is provably
useless.  The open problem is not the blindness (which is now
proven) but a tighter \emph{firing-rate lower bound}: SCCs whose
directed edges form a cycle $v_1\to\cdots\to v_k\to v_1$ produce
deterministic gate-firing patterns, and the resulting bounds should
be tighter than the hub model, since walkers traverse ascending edges
in a predictable rotation rather than via a random spoke-step.
Both extensions could narrow the three-order-of-magnitude gap between the
universal lower bound and observed LE in dense ratchets.

\paragraph{Injection word optimisation.}
The current framework fixes the injection word $\sigma_i^2\sigma_{i+1}^{-1}$,
selected for its spectral radius $2+\sqrt{3}$ and its mixed-sign structure.
The mixed sign is load-bearing: the negative exponent creates cross-epoch
cancellations that Abelian statistics cannot detect, and this is precisely
the mechanism behind the non-Abelian advantage ($r=0.175$ partial
correlation vs.\ $r=0.001$ for counting LE).

One could ask whether sweeping the exponent pair $(a, b)$ in
$\sigma_i^a \sigma_{i+1}^b$ with $a > 0$, $b < 0$ could increase this
marginal gain.  Two effects pull in opposite directions.
\emph{Higher $|b|$} deepens cross-epoch cancellations, potentially
widening the gap between temporal and shuffled SR and amplifying the
non-Abelian correction.
\emph{Higher $|a|$} raises the per-firing SR contribution, increasing
the raw LE signal but not necessarily the fraction of that signal
invisible to Abelian counting.

%% ============================================================
\section{Conclusion}\label{sec:conclusion}  
  
No Abelian statistic can locate the boundary between focused and  
dispersed ratchets: Abelian measures are blind to edge-interleaving  
order and therefore cannot detect the non-commutative cancellations  
that define the boundary (\Cref{prop:abelian-blindness}).  This is  
not a quantitative gap to be closed by better counting; it is  
structural.  Burau LE is the separator that proves the gap and the  
calibration oracle that sets the threshold.  

\paragraph{}  
Averaged over 49\,972 (SCC, WAR) pairs from a 1\,000-SCC corpus,  
the non-commutative correction is modest: temporal braid LE retains  
partial correlation $r = 0.175$ with deployment structure after  
controlling for firing rate ($p < 10^{-3}$), while counting LE  
collapses to $r = 0.001$ ($p = 0.98$).  But the $5.7\%$ of pairs  
where Burau and the Abelian rate disagree are not uniformly  
scattered: they concentrate at the focused/dispersed decision  
boundary with systematic, topology-dependent error direction. 
At this boundary, the Abelian rate carries no residual signal and  
would systematically prescribe the wrong remediation. The
non-Abelian screen is redundant for deployments deep in either  
regime but irreplaceable for the ambiguous cases that determine  
intervention type.  
  
%% ============================================================
\bibliographystyle{plain}

\newpage

\appendix

\section{Proofs for lower-bound growth rate}
\label{app:proofs}

\begin{proof}[Proof of \Cref{lem:spoke-step}]

Fix a walker~$i$ and an epoch $t \in \{1, \ldots, T\}$.

\paragraph{Step 1: Spoke-step probability in terms of spoke-set occupation.}
Condition on the position of walker~$i$ at epoch~$t$.  If the
walker is at vertex $v_j \in S$, it has at least one
outgoing spoke edge ($v_j \to h$) and at most $d_{\max}$
outgoing edges total, so the probability of choosing the spoke
edge is at least $1/d_{\max}$.  If the walker is at any vertex
outside~$S$, the probability of a spoke step is zero.  Hence,
\[
  \mathbb{E}[X_{i,t}]
  \;\geq\;
  \frac{1}{d_{\max}}
  \sum_{j=1}^{k}
  \Pr(\text{walker } i \text{ is at } v_j
       \text{ at epoch } t).
\]
Summing over all walkers and epochs and dividing by $nT$:
\[
  p_h
  \;\geq\;
  \frac{1}{d_{\max}}
  \sum_{j=1}^{k}
  \underbrace{\frac{1}{nT}
  \sum_{i=1}^{n}\sum_{t=1}^{T}
  \Pr(\text{walker } i \text{ is at } v_j
       \text{ at epoch } t)}_{\displaystyle =:\;\bar\pi(v_j)}.
\]
Since $C$ is strongly connected, the walk is irreducible with unique
stationary distribution~$\pi$.  In stationarity $\bar\pi(v_j) = \pi(v_j)$;
for arbitrary initialisation, $\bar\pi(v_j) \to \pi(v_j)$ as $T\to\infty$
by the ergodic theorem.  It therefore suffices to lower-bound
$\sum_j \pi(v_j)$, the total stationary mass on the spoke set.

\paragraph{Step 2a: Universal bound on $\pi(v_j)$ via path-chaining.}
Since $\sum_v \pi(v) = 1$, some vertex $v^+$ satisfies
$\pi(v^+) \geq 1/|V|$.  Because $C$ is strongly connected,
$v^+$ reaches any spoke vertex $v_j$ via a directed path
$x_0 = v^+ \to x_1 \to \cdots \to x_\ell = v_j$ of length
$\ell \leq \mathrm{diam}(C)$.  At each step, the stationary
equation $\pi(w) = \sum_{u \to w} \pi(u)/d^+(u)$ implies
$\pi(x_{m+1}) \geq \pi(x_m)/d_{\max}$ (the contribution from
$x_m$ alone, all other terms being non-negative).  Iterating:
\[
  \pi(v_j) \;\geq\; \frac{1}{|V| \cdot d_{\max}^{\mathrm{diam}(C)}}.
\]
Substituting into Step~1:
\[
  p_h \;\geq\;
  \frac{k}{|V| \cdot d_{\max}^{\,\mathrm{diam}(C)+1}}.
  \tag{i}
\]

\paragraph{Step 2b: Spoke-coverage bound via summed stationarity.}
The path-chaining argument bounds each $\pi(v_j)$ separately and
discards all inflow to $v_j$ except from one predecessor.
When $C$ is spoke-covered, we can instead bound
$\sum_j \pi(v_j)$ \emph{globally} by summing the stationary
equations over the entire spoke set:
\[
  \sum_{j=1}^{k} \pi(v_j)
  \;=\; \sum_{j=1}^{k} \sum_{u:\, u \to v_j} \frac{\pi(u)}{d^+(u)}
  \;=\; \sum_{u \in V} \pi(u)
        \cdot \frac{|\{j : u \to v_j \in E\}|}{d^+(u)}.
\]
If every vertex has out-edges to all $k$ spoke vertices
($|\{j : u \to v_j\}| = k$ for all $u$) and $d^+(u) \leq d_{\max}$:
\[
  \sum_{j=1}^{k} \pi(v_j)
  \;\geq\; k \sum_{u \in V} \frac{\pi(u)}{d_{\max}}
  \;=\; \frac{k}{d_{\max}}.
\]
Under the weaker spoke-coverage condition
($|\{j : u \to v_j\}| \geq 1$ for all $u$), the right-hand
side is $1/d_{\max}$.  Substituting the full-coverage case
into Step~1:
\[
  p_h \;\geq\; \frac{1}{d_{\max}} \cdot \frac{k}{d_{\max}}
       \;=\; \frac{k}{d_{\max}^{2}}.
  \tag{ii}
\]
This bound is independent of $|V|$ and $\mathrm{diam}(C)$:
it avoids the path-chaining penalty entirely by using the
full inflow to the spoke set from \emph{every} vertex
simultaneously.

\paragraph{Step 3: Concentration across walkers.}
Within a single walker, the indicators $X_{i,1}, \ldots,
X_{i,T}$ are Markov-dependent (the walker's position at
epoch~$t+1$ depends on its position at epoch~$t$), so
Hoeffding's inequality cannot be applied directly across epochs.
However, the $n$ walkers are \emph{independent}.

Define the per-walker spoke-step frequency
$Y_i = \frac{1}{T}\sum_{t=1}^{T} X_{i,t}$ for each walker
$i = 1, \ldots, n$.  Each $Y_i \in [0, 1]$, and the
$Y_1, \ldots, Y_n$ are independent (since the walkers perform
independent random walks).  The ensemble average is
$\hat{p}_h = \frac{1}{n}\sum_{i=1}^{n} Y_i$.  Applying
Hoeffding's inequality to the $n$ independent bounded random
variables $Y_1, \ldots, Y_n$:
\[
  \Pr\!\bigl[\hat{p}_h \leq \mathbb{E}[\hat{p}_h] -
  \epsilon\bigr]
  \;\leq\; \exp\!\bigl(-2n\epsilon^2\bigr).
\]
The firing-rate theorem (\Cref{thm:firing-rate}) uses the
\emph{expected} firing count over $T$ epochs: by linearity of
expectation, $\mathbb{E}[M_T]$ is exact regardless of
within-walker temporal dependence, so the lower bound requires
no concentration assumption.  The Hoeffding bound above
provides an additional guarantee: as the number of walkers~$n$
grows, the empirical ensemble rate $\hat{p}_h$ concentrates
around its expectation, ensuring robustness against
pathological individual walks.  At the framework's $n = 6$,
the concentration is modest; at larger $n$ the bound tightens
as $\exp(-\Omega(n))$.
\end{proof}

\begin{proof}[Proof of \Cref{lem:spoke-pair}]

Let $A_i(t)$ be the indicator that walker~$i$ takes a spoke
step at epoch~$t$.  Since the walkers perform independent
random walks, $\Pr[A_i(t)\cap A_j(t)] = \Pr[A_i(t)]\cdot
\Pr[A_j(t)] \geq p_h^2$ for every pair $i\neq j$.  By
linearity of expectation:
\[
  \mathbb{E}[m_{\mathrm{spoke}}(t)]
  \;=\;\sum_{\{i,j\}\subseteq[n]}\Pr[A_i(t)\cap A_j(t)]
  \;\geq\;\binom{n}{2}p_h^2.
\]
Substituting the bound from \Cref{lem:spoke-step} gives the
second inequality.
\end{proof}

\begin{proof}[Proof of \Cref{thm:firing-rate}]

Label walkers by NHI identifier $1 < 2 < \cdots < n$ and fix
the $\lfloor(n-1)/2\rfloor$ non-overlapping pairs
$\mathcal{P} = \bigl\{\{1,2\},\{3,4\},\ldots\bigr\}$.

\emph{Each pair fires with probability $\geq p_h^2$.}
For $\{i,i+1\}\in\mathcal{P}$, the walkers are independent so
$\Pr[A_i(t)\cap A_{i+1}(t)]\geq p_h^2$.  When both fire, both
traverse ascending spoke edges and both gate conditions
(\Cref{def:gate}) are satisfied.

\emph{Hub-landing guarantees rank-adjacency.}
Every spoke edge terminates at $h$, so both walkers arrive at
the same vertex with the same WAR value $W(h)$.  Ties in WAR
are broken by NHI identifier, placing the two walkers in
consecutive ranks: the gate fires for this pair.

\emph{Non-overlapping pairs do not conflict.}
When multiple pairs in $\mathcal{P}$ fire simultaneously,
all hub-landing walkers are ordered by NHI index.  The pairs
$\{1,2\},\{3,4\},\ldots$ occupy non-overlapping rank positions,
so the adjacent-pair guard (\Cref{sec:gate}) cannot block any
pair in $\mathcal{P}$ due to a firing by another.

\emph{Conclusion.}
The $\lfloor(n-1)/2\rfloor$ firing events are independent
(disjoint subsets of independent walkers) and each contributes
probability $\geq p_h^2$.  By linearity of expectation, summed
over $T$ epochs:
\[
  \mathbb{E}[M_T] \;\geq\;
  T\cdot\left\lfloor\frac{n-1}{2}\right\rfloor\cdot p_h^2.
  \qedhere
\]
\end{proof}

\begin{proof}[Proof of \Cref{cor:le-lower-bound}]

Each gate firing multiplies the Burau product by the injection
word's matrix, whose dominant eigenvalue is $2 + \sqrt{3}$
(\Cref{prop:sr-injection}).  In the counting-LE model (which
ignores cross-position spectral cancellation), $M_T$ firings
contribute $M_T \cdot \log(2 + \sqrt{3})$ to $\log\SR(T)$.
Dividing by $T$ gives the per-epoch rate.  The numerical bound
follows from \Cref{thm:firing-rate} with
$p_h \geq k/(N \cdot d_{\max}^{D+1})$ from
\Cref{lem:spoke-step}.
\end{proof}

\end{document}